\pgfplotsset{compat=1.6}
\newtheorem{theorem}{Theorem}
\newtheorem{lemma}[theorem]{Lemma}
\DeclareMathOperator{\poly}{poly}
\DeclareMathOperator{\OPT}{OPT}
\DeclareMathOperator{\SMC}{SMC}
\let\originalleft\left
\let\originalright\right
\renewcommand{\left}{\mathopen{}\mathclose\bgroup\originalleft}
\renewcommand{\right}{\aftergroup\egroup\originalright}
\DeclareFontFamily{U}{mathx}{\hyphenchar\font45}
\DeclareFontShape{U}{mathx}{m}{n}{
      <5> <6> <7> <8> <9> <10>
      <10.95> <12> <14.4> <17.28> <20.74> <24.88>
      mathx10
      }{}
\DeclareSymbolFont{mathx}{U}{mathx}{m}{n}
\DeclareMathSymbol{\bigtimes}{1}{mathx}{"91}
\newcommand\dotline{\@ifnextchar[%]
  \answerlinetowidth\answerlinetoeol}
\newcommand\answerlinetowidth[1][0pt]{\hbox to #1{\leaders\hbox to \answerdotsep{\hss.\hss}\hfill}}
\newcommand\answerlinetoeol{\leaders\hbox to \answerdotsep{\hss.\hss}\hfill\strut}
\newcommand\answerdotsep{0.187cm}
\newenvironment{breakablealgorithm}
  {% \begin{breakablealgorithm}
     \refstepcounter{algorithm}% New algorithm
     \hrule height.8pt depth0pt \kern2pt% \@fs@pre for \@fs@ruled
     \renewcommand{\caption}[2][\relax]{% Make a new \caption
       {\raggedright\textbf{\fname@algorithm~\thealgorithm} ##2\par}%
       \ifx\relax##1\relax % #1 is \relax
         \addcontentsline{loa}{algorithm}{\protect\numberline{\thealgorithm}##2}%
       \else % #1 is not \relax
         \addcontentsline{loa}{algorithm}{\protect\numberline{\thealgorithm}##1}%
       \fi
       \kern2pt\hrule\kern2pt
     }
  }{% \end{breakablealgorithm}
     \kern2pt\hrule\relax% \@fs@post for \@fs@ruled
  }
\DeclareMathOperator{\e}{e}
\title{Quantum speedups for dynamic programming \\ on  $n$-dimensional lattice graphs}
\author[1]{Adam Glos}
\author[2]{Martins Kokainis}
\author[3]{Ryuhei Mori}
\author[2]{Jevgēnijs Vihrovs}
\affil[1]{Institute of Theoretical and Applied Informatics,\authorcr Polish Academy of Sciences, ul.~Bałtycka 44-100 Gliwice, Poland}
\affil[2]{Centre for Quantum Computer Science, Faculty of Computing,\authorcr University of Latvia, Raiņa 19, Riga, Latvia, LV-1586}
\affil[3]{School of Computing, Tokyo Institute of Technology,\authorcr 2-12-1, Ookayama, Meguro-ku, 152-8550, Japan}
\date{}
\begin{document}

\maketitle

\begin{abstract}
    Motivated by the quantum speedup for dynamic programming on the Boolean hypercube by Ambainis et al.~(2019), we investigate which graphs admit a similar quantum advantage.
    In this paper, we examine a generalization of the Boolean hypercube graph, the $n$-dimensional lattice graph $Q(D,n)$ with vertices in $\{0,1,\ldots,D\}^n$.
    We study the complexity of the following problem: given a subgraph $G$ of $Q(D,n)$ via query access to the edges, determine whether there is a path from $0^n$ to $D^n$.
    While the classical query complexity is $\widetilde{\Theta}((D+1)^n)$, we show a quantum algorithm with complexity $\widetilde O(T_D^n)$, where $T_D < D+1$.
    The first few values of $T_D$ are $T_1 \approx 1.817$, $T_2 \approx 2.660$, $T_3 \approx 3.529$, $T_4 \approx 4.421$, $T_5 \approx 5.332$.
    We also prove that $T_D \geq \frac{D+1}{\e}$, thus for general $D$, this algorithm does not provide, for example, a speedup, polynomial in the size of the lattice.
    
    While the presented quantum algorithm is a natural generalization of the known
    quantum algorithm for $D=1$ by Ambainis et al., the analysis of complexity is rather complicated.
    For the precise analysis, we use the saddle-point method, which is a
    common tool in analytic combinatorics, but has not been widely used
    in this field.
    
    We then show an implementation of this algorithm with time complexity $\poly(n)^{\log n} T_D^n$, and apply it to the \textsc{Set Multicover} problem.
    In this problem, $m$ subsets of $[n]$ are given, and the task is to find the smallest number of these subsets that cover each element of $[n]$ at least $D$ times.
    While the time complexity of the best known classical algorithm is $O(m(D+1)^n)$, the time complexity of our quantum algorithm is $\poly(m,n)^{\log n} T_D^n$.
\end{abstract}

\section{Introduction}

Dynamic programming (DP) algorithms have been widely used to solve various NP-hard problems in exponential time.
Bellman, Held and Karp showed how DP can be used to solve the \textsc{Travelling Salesman Problem} in  $\widetilde{O}(2^n)$\footnote{$f(n)=\widetilde{O}(g(n))$ if $f(n) = O(\log^c(g(n)) g(n))$ for some constant $c$.} time using DP \cite{Bel62,HK62}, which still remains the most efficient classical algorithm for this problem.
Their technique can be used to solve a plethora of different problems \cite{FK10,Bodlaender2012}.

The DP approach of Bellman, Held and Karp solves the subproblems corresponding to subsets of an $n$-element set, sequentially in increasing order of the subset size.
This typically results in an $\widetilde{\Theta}(2^n)$ time algorithm, as there are $2^n$ distinct subsets.
What kind of speedups can we obtain for such algorithms using quantum computers?

It is natural to consider applying Grover's search, which is known to speed up some algorithms for NP-complete problems.
For example, we can use it to search through the $2^n$ possible assignments to the SAT problem instance on $n$ variables in $\widetilde{O}(\sqrt{2^n})$ time.
However, it is not immediately clear how to apply it to the DP algorithm described above.
Recently, Ambainis et al.~showed a quantum algorithm that combines classical precalculation with recursive applications of Grover's search that solves such DP problems in $\widetilde{O}(1.817^n)$ time, assuming the QRAM model of computation \cite{ABIKPV19}.

In their work, they examined the transition graph of such a DP algorithm, which can be seen as a directed $n$-dimensional Boolean hypercube, with edges connecting smaller weight vertices to larger weight vertices.
A natural question arises, for what other graphs there exist quantum algorithms that achieve a speedup over the classical DP?
In this work, we examine a generalization of the hypercube graph, the $n$-dimensional lattice graph with vertices in $\{0,1,\ldots,D\}^n$.

While the classical DP for this graph has running time $\widetilde{\Theta}((D+1)^n)$, as it examines all vertices, we prove that there exists a quantum algorithm (in the QRAM model) that solves this problem in time $\poly(n)^{\log n} T_D^n$ for $T_D < D+1$ (Theorems \ref{thm:main}, \ref{thm:time}).
Our algorithm essentially is a generalization of the algorithm of Ambainis et al.
We show the following running time for small values of $D$:
\begin{table}[H] \label{tbl:intro}
\begin{center}
\begin{tabular}{c||c|c|c|c|c|c}
        $D$ & $1$     & $2$     & $3$     & $4$     & $5$     & $6$ \\ \hline\hline
        $T_D$ & $1.81692$ & $2.65908$ & $3.52836$ & $4.42064$ & $5.33149$ & $6.25720$
\end{tabular}
\end{center}
\caption{The complexity of the quantum algorithm.}
\end{table}
A detailed summary of our numerical results is given in Section \ref{sec:smalld}.
Note that the case $D=1$ corresponds to the hypercube, where we have the same complexity as Ambainis et al.
In our proofs, we extensively use the saddle point method from analytic combinatorics to estimate the asymptotic value of the combinatorial expressions arising from the complexity analysis.

We also prove a lower bound on the query complexity of the algorithm for general $D$.
Our motivation is to check whether our algorithm, for example, could achieve complexity $\widetilde O((D+1)^{cn})$ for large $D$ for some $c < 1$.
We prove that this is not the case: more specifically, for any $D$, the algorithm performs at least $\widetilde\Omega\left(\left(\frac{D+1}{\e}\right)^n\right)$ queries (Theorem \ref{thm:lb}).

As an example application, we apply our algorithm to the \textsc{Set Multicover} problem (SMC), which is a generalization of the \textsc{Set Cover} problem.
In this problem, the input consists of $m$ subsets of the $n$-element set, and the task is to calculate the smallest number of these subsets that together cover each element at least $D$ times, possibly with overlap and repetition.
While the best known classical algorithm has running time $O(m(D+1)^n)$ \cite{Nederlof08, HWYL10}, our quantum algorithm has running time $\poly(m,n)^{\log n} T_D^n$, improving the exponential complexity (Theorem \ref{thm:smc}).

The paper is organized as follows.
In Section \ref{sec:prelim}, we formally introduce the $n$-dimensional lattice graph and some of the notation used in the paper.
In Section \ref{sec:problem}, we define the generic query problem that models the examined DP.
In Section \ref{sec:algo}, we describe our quantum algorithm.
In Section \ref{sec:query}, we establish the query complexity of this algorithm and prove the aforementioned lower bound.
In Section \ref{sec:time}, we discuss the implementation of this algorithm and establish its time complexity.
Finally, in Section \ref{sec:app}, we show how to apply our algorithm to SMC, and discuss other related problems.

\section{Preliminaries} \label{sec:prelim}

The $n$-dimensional lattice graph is defined as follows.
The vertex set is given by $\{0,1,\ldots,D\}^n$, and the edge set consists of directed pairs of two vertices $u$ and $v$ such that $v_i = u_i+1$ for exactly one $i$, and $u_j = v_j$ for $j \neq i$.
We denote this graph by $Q(D,n)$.
Alternatively, this graph can be seen as the Cartesian product of $n$ paths on $D+1$ vertices.
The case $D=1$ is known as the Boolean hypercube and is usually denoted by $Q_n$.

We define the \emph{weight} of a vertex $x \in V$ as the sum of its coordinates $|x| \coloneqq \sum_{i=1}^n x_i$.
Denote $x \leq y$ iff for all $i \in [n]$, $x_i \leq y_i$ holds. If additionally $x \neq y$, denote such relation by $x < y$.

Throughout the paper we use the standard notation $[n] \coloneqq \{1,\ldots,n\}$.
In Section \ref{sec:smc}, we use notation for the superset $2^{[n]} \coloneqq \{S \mid S \subseteq [n]\}$ and for the characteristic vector $\chi(S) \in \{0,1\}^n$ of a set $S \in [n]$ defined as $\chi(S)_i = 1$ iff $i \in S$, and $0$ otherwise.

We write $f(n) = \poly(n)$ to denote that $f(n) = O(n^c)$ for some constant $c$.
We also write $f(n,m) = \poly(n,m)$ to denote that $f(n,m) = O(n^c m^d)$ for some constants $c$ and $d$.

For a multivariable polynomial $p(x_1,\ldots,x_m)$, we denote by $[x_1^{c_1}\cdots x_m^{c_m}] p(x_1,\ldots,x_m)$ its coefficient at the multinomial $x_1^{c_1}\cdots x_m^{c_m}$.

\section{Path in the hyperlattice} \label{sec:problem}

We formulate our generic problem as follows.
The input to the problem is a subgraph $G$ of $Q(D,n)$.
The problem is to determine whether there is a path from $0^n$ to $D^n$ in $G$.
We examine this as a query problem: a single query determines whether an edge $(u,v)$ is present in $G$ or not.

Classically, we can solve this problem using a dynamic programming algorithm that computes the value $\text{dp}(v)$ recursively for all $v$, which is defined as $1$ if there is a path from $0^n$ to $v$, and $0$ otherwise.
It is calculated by the Bellman, Held and Karp style recurrence \cite{Bel62, HK62}:
$$\text{dp}(v) = \bigvee_{(u,v) \in E}\{ \text{dp}(u) \land ((u,v) \in G)\}, \hspace{1cm} \text{dp}(0^n) = 1.$$
The query complexity of this algorithm is $O(n (D+1)^n)$.
From this moment we refer to this as the \emph{classical dynamic programming algorithm}.

The query complexity is also lower bounded by $\widetilde{\Omega}((D+1)^n)$.
Consider the sets of edges $E_W$ connecting the vertices with weights $W$ and $W+1$, $$E_W \coloneqq \{(u,v) \mid (u,v) \in Q(D,n), |u| = W, |v| = W+1\}.$$
Since the total number of edges is equal to $(D+1)^{n-1} Dn$, there is such a $W$ that $|E_W| \geq (D+1)^{n-1} Dn/Dn = (D+1)^{n-1}$ (in fact, one can prove that the largest size is achieved for $W=\lfloor nD/2 \rfloor$ \cite{dBvETK51}, but it is not necessary for this argument).
Any such $E_W$ is a cut of $H_D$, hence any path from $0^n$ to $D^n$ passes through $E_W$.
Examine all $G$ that contain exactly one edge from $E_W$, and all other edges.
Also examine the graph that contains no edges from $E_W$, and all other edges.
In the first case, any such graph contains a desired path, and in the second case there is no such path.
To distinguish these cases, one must solve the OR problem on $|E_W|$ variables.
Classically, $\Omega(|E_W|)$ queries are needed (see, for example, \cite{BdW02}).
Hence, the classical (deterministic and randomized) query complexity of this problem is $\widetilde{\Theta}((D+1)^n)$.
This also implies $\widetilde{\Omega}(\sqrt{(D+1)^n})$ quantum lower bound for this problem \cite{BBBV97}.

\section{The quantum algorithm} \label{sec:algo}

Our algorithm closely follows the ideas of \cite{ABIKPV19}.
We will use the well-known generalization of Grover's search:
\begin{theorem}[Variable time quantum search, Theorem 3 in \cite{Amb10}] \label{thm:vts}
Let $\mathcal A_1$, $\ldots$, $\mathcal A_N$ be quantum algorithms that compute a function $f : [N] \to \{0,1\}$ and have query complexities $t_1$, $\ldots$, $t_N$, respectively, which are known beforehand.
Suppose that for each $\mathcal A_i$, if $f(i) = 0$, then $A_i = 0$ with certainty, and if $f(i) = 1$, then $A_i = 1$ with constant success probability.
Then there exists a quantum algorithm with constant success probability that checks whether $f(i) = 1$ for at least one $i$ and has query complexity
$O\left(\sqrt{t_1^2+\ldots+t_N^2}\right).$
Moreover, if $f(i) = 0$ for all $i\in[N]$, then the algorithm outputs $0$ with certainty.
\end{theorem}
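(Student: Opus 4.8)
The plan is to obtain this as a corollary of \emph{variable-time amplitude amplification} (VTAA). First I would merge $\mathcal A_1,\dots,\mathcal A_N$ into a single algorithm $\mathcal A$ acting on a branch register $|i\rangle$ (initialised to the uniform superposition over $[N]$), a global clock, and the workspaces of the $\mathcal A_i$: controlled on $|i\rangle$ it runs $\mathcal A_i$, which after its $t_i$ queries writes $f(i)$ into a flag qubit and then idles. Since the running times $t_i$ are known in advance, $\mathcal A$ has a fully deterministic schedule --- at each global step we know exactly which branches are still querying the input --- so it is a genuine variable-stopping-time algorithm. At the end, the flag qubit is $1$ with probability $\epsilon = \Theta(M/N)$, where $M = |\{i : f(i)=1\}|$; in particular $\epsilon = \Omega(1/N)$ when $M\ge 1$, and $\epsilon = 0$ when $M = 0$.

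Next I would run VTAA on $\mathcal A$. Its advantage over ordinary amplitude amplification is that the query cost depends on the \emph{$\ell_2$-average} stopping time $T_{(2)} := \bigl(\tfrac1N\sum_i t_i^2\bigr)^{1/2}$ rather than on $\max_i t_i$; amplifying $\epsilon$ to a constant then costs $O\bigl(T_{(2)}/\sqrt\epsilon\bigr) = O\bigl(\sqrt N\,T_{(2)}\bigr) = O\bigl(\sqrt{\textstyle\sum_i t_i^2}\bigr)$. To get the $\ell_2$ dependence one partitions the branches into dyadic groups $G_k = \{i : 2^{k-1} < t_i \le 2^k\}$, $k = 0,\dots,\lceil\log_2\max_i t_i\rceil$, and builds a \emph{nested} family of amplitude-amplification procedures, one per group in increasing order of $k$: the level-$k$ procedure performs a chosen number $d_k$ of amplification rounds on the subroutine ``run the level-$(k-1)$ procedure, then advance all still-running branches from time $2^{k-1}$ to $2^k$''. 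The $d_k$ are tuned so the overall success amplitude reaches $\Omega(1)$ while the query-cost recursion $\mathrm{cost}_k = d_k\,(\mathrm{cost}_{k-1} + O(2^k))$ telescopes to $O(\sqrt{\sum_i t_i^2})$; this uses $\sum_k |G_k|\,2^{2k} = \Theta(\sum_i t_i^2)$ together with $\epsilon = \Omega(1/N)$. As $\epsilon$ is unknown, one wraps the whole thing in the standard exponential search over guesses $\epsilon \in \{1,\tfrac12,\dots,\tfrac1N\}$, whose total cost is geometric and hence still $O(\sqrt{\sum_i t_i^2})$.

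The exactness on no-instances is immediate: when $M = 0$, $\mathcal A$ produces a state in the flag-$=0$ subspace, and every reflection used by (variable-time) amplitude amplification --- the reflection about the flagged subspace and the reflections about the states obtained by applying the level-$k$ procedures to $|0\rangle$, all of which lie in the flag-$=0$ subspace when $M=0$ --- maps the flag-$=0$ subspace into itself; hence the final flag measurement returns $0$ with certainty. For yes-instances one instead has to track the error accumulated over the $O(\log\max_i t_i)$ nested amplifications (each is only approximate because the success probability feeding into it is not known exactly) and show it stays below a constant by letting the per-level precision decay geometrically in $k$, which costs only constant factors.

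The step I expect to be the main obstacle is the cost recursion of the second paragraph: one has to commit to amplification counts $d_k$ without knowing the intermediate conditional success probabilities, and then prove that the total query count is genuinely $O(\sqrt{\sum_i t_i^2})$ rather than, say, $O(\sqrt{\log\max_i t_i}\cdot\sqrt{\sum_i t_i^2})$ --- dodging this logarithmic loss is precisely what forces the nested (rather than sequential) amplification structure and the delicate parameter choices of Ambainis's argument. For the applications in this paper even a $\polylog$-lossy version of the bound would suffice, since it is only ever used inside $\widetilde O(\cdot)$ estimates.
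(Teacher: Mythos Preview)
This theorem is not proved in the paper; it is quoted from \cite{Amb10}, and the paper's only addition is the one-line remark that the version with bounded-error (rather than zero-error) $\mathcal A_i$ ``follows from the construction of the algorithm.'' Your sketch is a faithful outline of Ambainis's original argument (merge the $\mathcal A_i$ into a single variable-stopping-time algorithm, then apply nested amplitude amplification with dyadic time buckets), so in that sense you are reproducing the cited proof rather than diverging from it. Your observation that a $\polylog$-lossy version would already suffice for this paper is also correct, since every use of Theorem~\ref{thm:vts} here is inside a $\widetilde O(\cdot)$ bound.
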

Even though Ambainis formulates the main theorem for zero-error inputs, the statement above follows from the construction of the algorithm.

Now we describe our algorithm.
We solve a more general problem: suppose $s, t \in \{0,1,\ldots,D\}^n$ are such that $s < t$ and we are given a subgraph of the $n$-dimensional lattice with vertices in
$$\bigtimes_{i=1}^n \{s_i,\ldots,t_i\},$$
and the task is to determine whether there is path from $s$ to $t$.
We need this generalized problem because our algorithm is recursive and is called for sublattices.

Define $d_i \coloneqq t_i-s_i$.
Let $n_d$ be the number of indices $i \in [n]$ such that $d_i = d$.
Note that the minimum and maximum weights of the vertices of this lattice are $|s|$ and $|t|$, respectively.

We call a set of vertices with fixed total weight a \emph{layer}.
The algorithm will operate with $K$ layers (numbered $1$ to $K$), with the $k$-th having weight $|s|+W_k$, where $W_{k} \coloneqq \left\lfloor \sum_{d=1}^D \alpha_{k,d} d n_d\right\rfloor$.
Denote the set of vertices in this layer by

$$\mathcal L_k \coloneqq \left\{v \mid |v| = |s|+W_{k}\right\}.$$

Here, $\alpha_{k,d} \in (0,1/2)$ are constant parameters that have to be determined before we run the algorithm.
The choice of $\alpha_{k,d}$ does not depend on the input to the algorithm, similarly as it was in \cite{ABIKPV19}.
For each $k \in [K]$ and $d \in [D]$, we require that $\alpha_{k,d} < \alpha_{k+1,d}$.
In addition to the $K$ layers defined in this way, we also consider the $(K+1)$-th layer $\mathcal L_{K+1}$, which is the set of vertices with weight $|s|+W_{K+1}$, where $W_{K+1}\coloneqq \left\lfloor \frac{|t|-|s|}{2} \right\rfloor$.
We can see that the weights $W_1, \ldots, W_{K+1}$ defined in this way are non-decreasing.

\bigskip

\begin{breakablealgorithm} \label{alg:main}
\caption{The quantum algorithm for detecting a path in the hyperlattice.}

\textsc{Path($s$, $t$):}
\begin{enumerate}
\item \label{itm:sc0} 
Calculate $n_1$, $\ldots$, $n_D$, and $W_1$, $\ldots$, $W_{K+1}$.
If $W_{k}=W_{k+1}$ for some $k$, determine whether there exists a path from $s$ to $t$ using classical dynamic programming and return.

\item \label{itm:sc1} Otherwise, first perform the precalculation step.
Let $\text{dp}(v)$ be $1$ iff there is a path from $s$ to $v$.
Calculate $\text{dp}(v)$ for all vertices $v$ such that $|v| \leq |s|+W_1$ using classical dynamic programming.
Store the values of $\text{dp}(v)$ for all vertices with $|v| = |s|+W_1$.

Let $\text{dp}'(v)$ be $1$ iff there is a path from $v$ to $t$.
Symmetrically, we also calculate $\text{dp}'(v)$ for all vertices with $|v| = |t| - W_1$.

\item \label{itm:sc2} Define the function $\textsc{LayerPath}(k,v)$ to be $1$ iff there is a path from $s$ to $v$ such that $v \in \mathcal L_k$.
Implement this function recursively as follows.
\begin{itemize}
    \item $\textsc{LayerPath}(1,v)$ is read out from the stored values.
    \item For $k > 1$, run VTS over the vertices $u \in \mathcal L_{k-1}$ such that $u < v$.
    The required value is equal to
    $$\textsc{LayerPath}(k,v) = \bigvee_u \left\{ \textsc{LayerPath}(k-1,u) \land \textsc{Path}(u,v) \right\}.$$
\end{itemize}

\item \label{itm:sc3} Similarly define and implement the function $\textsc{LayerPath}'(k,v)$, which denotes the existence of a path from $v$ to $t$ such that $v \in \mathcal L_k'$ (where $\mathcal L_k'$ is the layer with weight $|t|-W_k$).
To find the final answer, run VTS over the vertices in the middle layer $v \in \mathcal L_{K+1}$ and calculate
$$\bigvee_v\left\{ \textsc{LayerPath}(K+1,v) \land \textsc{LayerPath}'(K+1,v)\right\}.$$
\end{enumerate}
\end{breakablealgorithm}

\section{Query complexity} \label{sec:query}

For simplicity, let us examine the lattice
$$\bigtimes_{i=1}^n \{0,\ldots,t_i-s_i\},$$
as the analysis is identical.
Let the number of positions with maximum coordinate value $d$ be $n_d$.
We make an ansatz that the exponential complexity can be expressed as
$$T(n_1,\ldots,n_D) \coloneqq T_1^{n_1} T_2^{n_2} \cdot \ldots \cdot T_D^{n_D}$$
for some values $T_1, T_2, \ldots, T_D > 1$ (we also can include $n_0$ and $T_0$, however, $T_0 = 1$ always and doesn't affect the complexity).
We prove it by constructing generating polynomials for the precalculation and quantum search steps, and then approximating the required coefficients asymptotically.
We use the saddle point method that is frequently used for such estimation, specifically the theorems developed in \cite{BM04}.

\subsection{Generating polynomials} \label{sec:poly}
First we estimate the number of edges of the hyperlattice queried in the precalculation step.
The algorithm queries edges incoming to the vertices of weight at most $W_1$, and each vertex can have at most $n$ incoming edges.
The size of any layer with weight less than $W_1$ is at most the size of the layer with weight exactly $W_1$, as the size of the layers is non-decreasing until weight $W_{K+1}$ \cite{dBvETK51}.
Therefore, the number of queries during the precalculation is at most $n \cdot W_1 \cdot |\mathcal L_1| \leq n^2 D |\mathcal L_1|$, as $W_1 \leq nD$.
Since we are interested in the exponential complexity, we can omit $n$ and $D$, thus the exponential query complexity of the precalculation is given by  $|\mathcal L_1|$.

Now let $P_d(x) \coloneqq \sum_{i=0}^d x^i$.
The number of vertices of weight $W_1$ can be written as the coefficient at $x^{W_1}$ of the generating polynomial
$$P(x) \coloneqq \prod_{d=0}^D P_d(x)^{n_d}.$$
Indeed, each $P_d(x)$ in the product corresponds to a single position $i \in [n]$ with maximum value $d$ and the power of $x$ in that factor represents the coordinate of the vertex in this position.
Therefore, the total power that $x$ is raised to is equal to the total weight of the vertex, and coefficient at $x^{W_1}$ is equal to the number of vertices with weight $W_1$.
Since the total query complexity of the algorithm is lower bounded by this coefficient, we have
\begin{equation} \label{eq:prec}
T(n_1,\ldots,n_D) \geq \left[x^{W_1}\right]P(x).
\end{equation}

Similarly, we construct polynomials for the \textsc{LayerPath} calls.
Consider the total complexity of calling \textsc{LayerPath} recursively until some level $1 \leq k \leq K$ and then calling \textsc{Path} for a sublattice between levels $\mathcal L_k$ and $\mathcal L_{k+1}$.
Define the variables for the vertices chosen by the algorithm at level $i$ (where $k \leq i \leq K+1$) by $v^{(i)}$.
The \textsc{Path} call is performed on a sublattice between vertices $v^{(k)}$ and $v^{(k+1)}$, see Fig.~\ref{fig:rhombus}. 

\begin{figure}[h]
    \centering
    \includegraphics{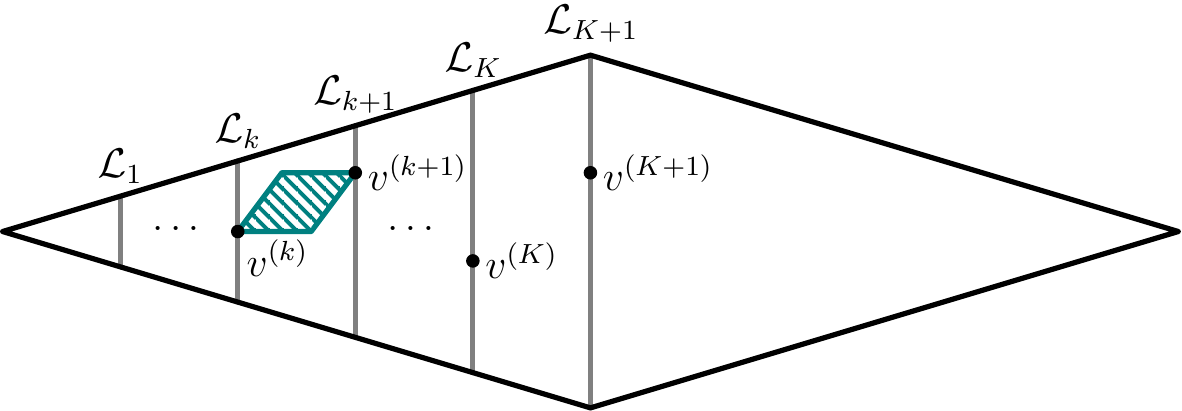}
    \caption{The choice of the vertices $v^{(i)}$ and the application of \textsc{Path} on the sublattice.}
    \label{fig:rhombus}
\end{figure}

Define
$$S_{k,d}(x_{k,k},\ldots,x_{k,K+1}) \coloneqq \sum_{i=0}^d T_i^2 \cdot \sum_{\substack{p_k,\ldots,p_{K+1}\in[0,d]\\ p_{k+1} \leq \ldots \leq p_{K+1} \\ p_{k+1}-p_k=i}}{}\prod_{j=k}^{K+1} x_{k,j}^{p_j}.$$
Again, this corresponds to a single coordinate.
The variable $x_{k,j}$ corresponds to the vertex $v^{(j)}$ and the power $p_j$ corresponds to the value of $v^{(j)}$ in that coordinate.

Examine the following multivariate polynomial:
$$S_k(x_{k,k},\ldots,x_{k,K+1}) \coloneqq \prod_{d=0}^D S_{k,d}^{n_d}(x_{k,k},\ldots,x_{k,K+1}).$$
We claim that the coefficient
$$\left[x_{k,k}^{W_k} \cdots x_{k,K+1}^{W_{K+1}}\right]S_k(x_{k,k},\ldots,x_{k,K+1})$$
is the required total complexity squared.

First of all, note that the value of this coefficient is the sum of $t^2$, where $t$ is the variable for the running time of \textsc{Path} between $v^{(k)}$ and $v^{(k+1)}$, for all choices of vertices $v^{(k)}$, $v^{(k+1)}$, $\ldots$, $v^{(K+1)}$.
Indeed, the powers $p_j$ encode the values of coordinates of $v^{(j)}$, and a factor of $T_i^2$ is present for each multinomial that has $p_{k+1}-p_k=i$ (that is, $v^{(k+1)}_l-v^{(k)}_l=i$ for the corresponding position $l$).

Then, we need to show that the sum of $t^2$ equals the examined running time squared.
Note that the choice of each vertex $v^{(j)}$ is performed using VTS.
In general, if we perform VTS on the algorithms with running times $s_1$, $\ldots$, $s_N$, then the total squared running time is equal to $s_1^2+\ldots+s_N^2$ by Theorem \ref{thm:vts}.
By repeating this argument in our case inductively at the choice of each vertex $v^{(j)}$, we obtain that the final squared running time indeed is the sum of all $t^2$.

Therefore, the square of the total running time of the algorithm is lower bounded by
\begin{equation} \label{eq:src}
T(n_1,\ldots,n_D)^2 \geq \left[x_{k,k}^{W_k} \cdots x_{k,K+1}^{W_{K+1}}\right]S_k(x_{k,k},\ldots,x_{k,K+1}).
\end{equation}

Together the inequalities (\ref{eq:prec}) and (\ref{eq:src}) allow us to estimate $T$.
The total time complexity of the quantum algorithm is twice the sum of the coefficients given in Eq.~(\ref{eq:prec}) and (\ref{eq:src}) for all $k \in [K]$ (twice because of the calls to $\textsc{LayerPath}$ and its symmetric counterpart $\textsc{LayerPath}'$).
This is upper bounded by $2K$ times the maximum of these coefficients.
Since $2K$ is a constant, and there are $O(\log n)$ levels of recursion (see the next section), in total this contributes only $(2K)^{O(\log n)} = \poly(n)$ factor to the total complexity of the quantum algorithm.

\subsubsection{Depth of recursion}
Note that the algorithm stops the recursive calls if for at least one $k$, we have $W_k = W_{k+1}$, in which case it runs the classical dynamic programming on the whole sublattice at step \ref{itm:sc0}.
That happens when
$$\left\lfloor \sum_{d=1}^D \alpha_{k,d} d n_d\right\rfloor = \left\lfloor \sum_{d=1}^D \alpha_{k+1,d} d n_d\right\rfloor.$$
If this is true, then we also have
$$\sum_{d=1}^D \alpha_{k+1,d} d n_d - \sum_{d=1}^D \alpha_{k,d} d n_d = c$$
for some constant $c < 1$.
By regrouping the terms, we get
$$\sum_{d=1}^D (\alpha_{k+1,d} - \alpha_{k,d}) d n_d = c.$$
Denote $h \coloneqq \min_{d \in [D]} \{ \alpha_{k+1,d} - \alpha_{k,d} \}$.
Then
$$\sum_{d=1}^D dn_d \leq \frac{c}{h}.$$
Note that the left hand side is the maximum total weight of a vertex.
However, at each recursive call the difference between the vertices with the minimum and maximum total weights decreases twice, since the VTS call at step \ref{itm:sc3} runs over the vertices with weight half the current difference.
Since $c$ and $h$ is constant, after $O(\log(nD)) = O(\log n)$ recursive calls the recursion stops.
Moreover, the classical dynamic programming then runs on a sublattice of constant size, hence adds only a factor of $O(1)$ to the overall complexity.

Lastly, we can address the contribution of the constant factor of VTS from Theorem \ref{thm:vts} to the complexity of our algorithm.
At one level of recursion there are $K+1$ nested applications of VTS, and there are $O(\log n)$ levels of recursion.
Therefore, the total overhead incurred is $O(1)^{O(K \log n)} = \poly(n)$, since $K$ is a constant.

% \subsubsection{Contribution of the logarithmic factors}
% It remains to deal with logarithmic factors arising from VTS, which, intuitively, should not affect the exponential complexity.
% Suppose that we proved that the complexity without logarithmic factors is bounded by $T_1^{n_1}\cdots T_D^{n_D}$.
% Let $T_1^{n_1}\cdots T_D^{n_D} \leq C^n$ for some constant $C$ for simplicity.
% Then we can prove inductively that taking into the account the logarithmic factors, the complexity is $O(n)^{r K} C^n$, where $r$ is the recursive depth of $\textsc{Path}$.
% Again, examine the sequential recursive calls of \textsc{LayerPath} until some layer $k$, where the \textsc{Path} is called recursively.
% The VTS at the call to $\textsc{LayerPath}(k,v)$ adds a factor of $\log(O(n)^{r K} C^n) = O(n \log C + rK \log n) = O(n)$.
% Thus the total complexity at that point is upper bounded by $O(n)^{r K + 1} C^n$.
% Similarly, each of the upper VTS calls each adds a factor of $O(n)$, hence the total final complexity is $O(n)^{(r +1) K} C^n$.

% Since $r = O(\log n)$, the total complexity can be upper bounded by $n^{O(K \log n)} C^n$.
% We will keep in mind this sub-exponential factor, but we can ignore it for the purposes of estimating the exponential complexity.
% Asymptotically, since we are not interested in determining the exact explicit value of $C$, we can lose this factor also by writing the complexity as $n^{O(K \log n)} C^n = O((C+\epsilon)^n)$ for any value of $\epsilon > 0$.

\subsection{Saddle point approximation}

In this section, we show how to describe the tight asymptotic complexity of $T(n_1,\ldots,n_D)$ using the saddle point method (a detailed review can be found in \cite{FS09}, Chapter VIII).
Our main technical tool will be the following theorem.
\begin{theorem} \label{thm:asy}
Let $p_1(x_1,\ldots,x_m)$, $\ldots$, $p_D(x_1,\ldots,x_m)$ be polynomials with non-negative coefficients.
Let $n$ be a positive integer and $b_1, \ldots, b_D$ be non-negative rational numbers such that $b_1 + \ldots + b_D = 1$ and $b_d n$ is an integer for all $d \in [D]$.
Let $a_{i,d}$ be rational numbers (for $i \in [m]$, $d \in [D]$) and $\alpha_i \coloneqq a_{i,1}b_1+\ldots+a_{i,D}b_D$.
Suppose that $\alpha_i n$ are integer for all $i \in [m]$.
Then
\begin{enumerate}[(1)]
    \item \label{itm:upp} $\left[ x_1^{\alpha_1 n}\cdots x_m^{\alpha_m n}\right]\prod_{d=1}^D p_d(x_1,\ldots,x_m)^{b_d n} \leq \left(\inf_{x_1,\ldots,x_m>0} \prod_{d=1}^D \left(\frac{p_d(x_1,\ldots,x_m)}{x_1^{a_{1,d}}\cdots x_m^{a_{m,d}}}\right)^{b_d}\right)^n$
    \item \label{itm:low} $\left[ x_1^{\alpha_1 n}\cdots x_m^{\alpha_m n}\right]\prod_{d=1}^D p_d(x_1,\ldots,x_m)^{b_d n} = \Omega\left(\left(\inf_{x_1,\ldots,x_m>0} \prod_{d=1}^D \left(\frac{p_d(x_1,\ldots,x_m)}{x_1^{a_{1,d}}\cdots x_m^{a_{m,d}}}\right)^{b_d}\right)^n \right)$,
where $\Omega$ depends on the variable $n$.
\end{enumerate}
\end{theorem}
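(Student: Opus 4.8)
The plan is to prove part~(\ref{itm:upp}) by an elementary coefficient bound, and part~(\ref{itm:low}) by reinterpreting the coefficient as a multivariate local-limit-theorem probability and invoking the saddle point estimates of \cite{BM04}. For part~(\ref{itm:upp}) the only input needed is that a polynomial $P$ with non-negative coefficients satisfies $[x_1^{k_1}\cdots x_m^{k_m}]P(x)\le P(x_1,\dots,x_m)/(x_1^{k_1}\cdots x_m^{k_m})$ for all positive reals $x_1,\dots,x_m$, since the discarded monomials are non-negative. Applying this with $P=\prod_d p_d^{b_d n}$ (which has non-negative coefficients, each $b_d n$ being a non-negative integer) and $k_i=\alpha_i n$, and using $\alpha_i=\sum_d a_{i,d}b_d$ to write the denominator as $\prod_d(\prod_i x_i^{a_{i,d}})^{b_d n}$, turns the right-hand side into $\prod_d\big(p_d(x)/\prod_i x_i^{a_{i,d}}\big)^{b_d n}$, valid for every $x>0$; taking the infimum over $x$ gives~(\ref{itm:upp}).

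For part~(\ref{itm:low}), let $F(x)\coloneqq\prod_d\big(p_d(x)/\prod_i x_i^{a_{i,d}}\big)^{b_d}$. In the variables $u_i=\log x_i$, the function $\log F$ is a log-sum-exp term minus a linear term, hence convex, and in the regime of interest it attains its infimum at an interior critical point $r^*$. For each $d$ let $\nu_d$ be the probability distribution on $\mathbb{Z}_{\geq 0}^m$ with $\nu_d(J)=([x^J]p_d)\,(r^*)^J/p_d(r^*)$, and let $\nu$ be the convolution of $b_d n$ copies of $\nu_d$ over all $d$ (this uses $b_d n\in\mathbb{Z}_{\geq 0}$) --- equivalently, the law of a sum of $n$ independent bounded lattice vectors. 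Its generating function is $\prod_d\big(p_d(r^* x)/p_d(r^*)\big)^{b_d n}$, where $r^* x\coloneqq(r_1^*x_1,\dots,r_m^*x_m)$, so extracting the coefficient of $x^{\alpha n}$ and rescaling yields
\[
[x^{\alpha n}]\prod_d p_d(x)^{b_d n}=F(r^*)^n\cdot\nu(\alpha n),
\]
and the stationarity equation $\nabla_u\log F(r^*)=0$ says precisely that the mean of $\nu$ equals $\alpha n$. As $\alpha n$ is an integer vector by hypothesis, it lies on the support lattice of $\nu$, so a multivariate local limit theorem --- which is the content of the results of \cite{BM04}; see also \cite{FS09} --- gives $\nu(\alpha n)=\Theta(n^{-m/2})$. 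Since $F(r^*)=\inf_{x>0}F(x)$, this proves~(\ref{itm:low}); the $n^{-m/2}$ factor is precisely why the implied $\Omega$-constant is allowed to depend on $n$.

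The main obstacle is the non-degeneracy bookkeeping that the local limit theorem demands. One must verify that the covariance matrix of $\nu$ is non-singular --- equivalently, that the support of $\prod_d p_d$ affinely spans $\mathbb{R}^m$; otherwise $\log F$ fails to be strictly convex and $\nu$ is supported on a lower-dimensional sublattice, where the same bound survives but with an $n^{-m'/2}$ factor --- and that no coarser sublattice periodicity pushes the mean $\alpha n$ off the support coset. One must also treat the boundary case where the infimum of $F$ is only approached and not attained, which still yields the correct exponential rate but needs a separate limiting argument. All of these conditions are automatically met in the applications of the theorem in this paper, where each $p_d$ is $1+x+\dots+x^d$ or a product of such polynomials, so the remaining effort is to recast our setup into the precise form assumed by \cite{BM04} and to record the degenerate cases.
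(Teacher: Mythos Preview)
Your argument for part~(\ref{itm:upp}) is exactly the paper's. For part~(\ref{itm:low}) the paper takes a much shorter route: it sets $p(x)\coloneqq\prod_d p_d(x)^{b_d}$, checks the algebraic identity
\[
\frac{p(x)}{x_1^{\alpha_1}\cdots x_m^{\alpha_m}}=\prod_{d=1}^D\left(\frac{p_d(x)}{x_1^{a_{1,d}}\cdots x_m^{a_{m,d}}}\right)^{b_d},
\]
and then simply quotes Theorem~2 of \cite{BM04} as a black box to obtain $\lim_{n}\frac{1}{n}\log\big([x^{\alpha n}]p(x)^n\big)=\log\inf_{x>0}p(x)/x^\alpha$, which is already the claim. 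Your approach---tilting to a probability measure at the saddle point $r^*$, writing the coefficient as $F(r^*)^n\nu(\alpha n)$, and invoking a multivariate local limit theorem for $\nu(\alpha n)=\Theta(n^{-m/2})$---is precisely how results like \cite[Theorem~2]{BM04} are proved, so you are re-deriving the cited theorem rather than using it. That buys you a sharper prefactor (you see the $n^{-m/2}$ explicitly) and a more self-contained argument, at the cost of having to verify the non-degeneracy, aperiodicity, and interior-minimum conditions that you flag at the end; the paper avoids all of that bookkeeping because \cite{BM04} has already absorbed it. Both routes are correct; the paper's is a two-line citation, yours is an outline of the underlying analytic-combinatorics proof.
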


\begin{proof}
To prove this, we use the following saddle point approximation.\footnote{Setting $\gamma = 1$ in the statement of the original theorem.}
\begin{theorem}[Saddle point method, Theorem 2 in \cite{BM04}] \label{thm:saddle}
Let $p(x_1,\ldots,x_m)$ be a polynomial with non-negative coefficients.
Let $\alpha_1, \ldots, \alpha_m$ be some rational numbers and let $n_i$ be the series of all integers $j$ such that $\alpha_k j$ are integers and $\left[x_1^{\alpha_1 j}\cdots x_m^{\alpha_m j}\right] p(x_1,\ldots,x_m)^j \neq 0$.
Then
$$\lim_{i \to \infty} \frac{1}{n_i} \log \left( \left[x_1^{\alpha_1 n_i}\cdot \ldots\cdot x_m^{\alpha_m n_i}\right] p(x_1,\ldots,x_m)^{n_i} \right)= \inf_{x_1,\ldots,x_m > 0} \log \left( \frac{p(x_1,\ldots,x_m)}{x_1^{\alpha_1} \cdot \ldots \cdot x_m^{\alpha_m}} \right).$$
\end{theorem}
Let $p(x_1,\ldots,x_m) \coloneqq \prod_{d=1}^D p_d(x_1,\ldots,x_m)^{b_d}$, then
\begin{align*}
    \frac{p(x_1,\ldots,x_m)}{x_1^{\alpha_1} \cdots x_m^{\alpha_m}} 
    = \frac{\prod_{d=1}^D p_d(x_1,\ldots,x_m)^{b_d}}{x_1^{\alpha_1} \cdots x_m^{\alpha_m}}
    = \prod_{d=1}^D \frac{p_d(x_1,\ldots,x_m)^{b_d}}{x_1^{a_{1,d}b_d}\cdots x_m^{a_{m,d}b_d}} 
    = \prod_{d=1}^D \left( \frac{p_d(x_1,\ldots,x_m)}{x_1^{a_{1,d}}\cdots x_m^{a_{m,d}}} \right)^{b_d}.
\end{align*}
For the first part, as $p(x_1,\ldots,x_m)^n$ has non-negative coefficients, the coefficient at the multinomial $x_1^{\alpha_1 n} \cdots x_m^{\alpha_m n}$ is upper bounded by $\inf_{x_1,\ldots,x_m>0} \frac{p(x_1,\ldots,x_m)^n}{x_1^{\alpha_1 n} \cdots x_m^{\alpha_m n}} = \left(\inf_{x_1,\ldots,x_m>0} \frac{p(x_1,\ldots,x_m)}{x_1^{\alpha_1} \cdots x_m^{\alpha_m}}\right)^n$.
The second part follows directly by Theorem \ref{thm:saddle}.
\end{proof}

\subsubsection{Optimization program} To determine the complexity of the algorithm, we construct the following optimization problem.
Recall that the Algorithm \ref{alg:main} is given by the number of layers $K$ and the constants $\alpha_{k,d}$ that determine the weight of the layers, so assume they are fixed known numbers.
Assume that $\alpha_{k,d}$ are all rational numbers between $0$ and $1/2$ for $k \in [K]$; indeed, we can approximate any real number with arbitrary precision by a rational number.
Also let $T_0 = 1$ and $\alpha_{K+1,d} = 1/2$ for all $d \in [D]$ for convenience.

Examine the following program $\text{OPT}(D,K,\{\alpha_{k,d}\})$:
\begin{align*}
\text{minimize } T_D     \hspace{1cm}\text{s.t.}\hspace{1cm}    & T_d \geq \frac{P_d(x)}{x^{\alpha_{1,d}d}} & \forall d \in [D]\\
                                                                & T_d^2 \geq \frac{S_{k,d}(x_{k,k},\ldots,x_{k,K+1})}{x_{k,k}^{\alpha_{k,d}d}\cdots x_{k,K+1}^{\alpha_{K+1,d}d}} & \forall d \in [D], \forall k\in [K]\\
                                                                & T_d \geq 1 & \forall d \in [D]\\
                                                                & x > 0 \\
                                                                & x_{k,j} > 0 & \forall k \in [K], \forall j \in \{k,\ldots,K+1\}
\end{align*}

Let $n\coloneqq n_1+\ldots+n_D$ and $\alpha_k \coloneqq \frac{\sum_{d=1}^D \alpha_{k,d} d n_d}{n}$.
Suppose that $T_1, \ldots, T_D$ is a feasible point of the program. 
Then by Theorem \ref{thm:asy} (\ref{itm:upp}) (setting $b_i \coloneqq n_i/n$ and $a_{i,d} \coloneqq \alpha_{i,d} d$) we have
$$[x^{\alpha_1 n}] P(x) \leq \inf_{x>0} \prod_{d=1}^D \left(\frac{P_d(x)}{x^{\alpha_{1,d}d}}\right)^{n_d} \leq T_1^{n_1}\cdots T_D^{n_D}.$$
Similarly,
\begin{align*}
[x_{k,k}^{\alpha_k n}\cdots x_{k,K+1}^{\alpha_{K+1} n}] S_k(x_{k,k},\ldots,x_{k,K+1})
&\leq \inf_{x_{k,k},\ldots,x_{k,K+1}>0} \prod_{d=1}^D \left(\frac{S_{k,d}(x_{k,k},\ldots,x_{k,K+1})}{x_{k,k}^{\alpha_{k,d}d}\cdots x_{k,K+1}^{\alpha_{K+1,d}d}}\right)^{n_d} \\
&\leq (T_1^{n_1}\cdots T_D^{n_D})^2.
\end{align*}
Therefore, the program provides an upper bound on the complexity.

There are two subtleties that we need to address for correctness.
\begin{itemize}
    \item The numbers $\alpha_k n$ might not be integer; in Algorithm \ref{alg:main}, the weights of the layers are defined by $W_k = \lfloor \alpha_k n\rfloor$.
    This is a problem, since the inequalities in the program use precisely the numbers $\alpha_{k,d}$.
    Examine the coefficient $[x_1^{\lfloor\alpha_1 n\rfloor}\cdots x_m^{\lfloor\alpha_m n\rfloor}]p(x_1,\ldots,x_m)$ in such general case (when we need to round the powers).
    Let $\delta_k \coloneqq \alpha_k n - \lfloor \alpha_k n \rfloor$, here $0 \leq \delta_k < 1$.
    Then, by Theorem \ref{thm:asy} (\ref{itm:upp}),
    \begin{align*}
        \left[x_1^{\lfloor\alpha_1 n\rfloor}\cdots x_m^{\lfloor\alpha_m n\rfloor}\right]p(x_1,\ldots,x_m)^n 
        &\leq \inf_{x_1,\ldots,x_m \geq 0} \frac{p(x_1,\ldots,x_m)^n}{x_1^{\alpha_1 n - \delta_1}\cdots x_m^{\alpha_m n - \delta_m}} = (*)
    \end{align*}
    Now let $\hat x_1,\ldots,\hat x_m$ be the arguments that achieve $\inf_{x_1,\ldots,x_m \geq 0} \frac{p(x_1,\ldots,x_m)}{x_1^{\alpha_1}\cdots x_m^{\alpha_m}}$.
    Since $0 \leq \delta_k < 1$, we have $\hat x_k^{\delta_k} \leq \max\{\hat x_k,1\}$.
    Hence
    \begin{align*}
        (*) &\leq (\hat x_1^{\delta_1}\cdots \hat x_m^{\delta_m})\cdot\frac{p(\hat x_1,\ldots,\hat x_m)^n}{\hat x_1^{\alpha_1 n}\cdots \hat x_m^{\alpha_m n}} 
        \leq \left(\prod_{k=1}^m \max\{\hat x_k,1\}\right) \cdot \left(\inf_{x_1,\ldots,x_m \geq 0} \frac{p(x_1,\ldots,x_m)}{x_1^{\alpha_1}\cdots x_m^{\alpha_m}}\right)^n.
    \end{align*}
    As the additional factor is a constant, we can ignore it in the complexity.
    \item The second issue is when $W_k=W_{k+1}$ for some $k$. Then according to Algorithm \ref{alg:main}, we run the classical algorithm with complexity $\widetilde{\Theta}((D+1)^n)$.
    However, in that case $n$ is constant (see Section \ref{sec:poly}, Depth of recursion), which gives only a constant factor to the complexity.
\end{itemize}

\subsubsection{Optimality of the program}

In the start of the analysis, we made an assumption that the exponential complexity $T(n_1,\ldots,n_D)$ can be expressed as $T_1^{n_1}\cdots T_D^{n_D}$.
Here we show that the optimization program (which gives an upper bound on the complexity) can indeed achieve such value and gives the best possible solution.
\begin{itemize}
    \item First, we prove that $\text{OPT}(D,K,\{\alpha_{k,d}\})$ has a feasible solution.
For that, we need to show that all polynomials in the program can be upper bounded by a constant for some fixed values of the variables.

    First of all, $\frac{P_d(x)}{x^{\alpha_{1,d}d}}$ is upper bounded by $d+1$ (setting $x=1$).
    Now fix $k$ and examine the values $\frac{S_{k,d}(x_{k,k},\ldots,x_{k,K+1})}{x_{k,k}^{\alpha_{k,d}d}\cdots x_{k,K+1}^{\alpha_{K+1,d}d}}$.
    Examine only such assignments of the variables $x_{k,j}$ that $x_{k,k}x_{k,k+1}=1$ and $x_{k,j}=1$ for all other $j > k+1$.
    Now we write the polynomial as a univariate polynomial $S_{k,d}(y) \coloneqq S_{k,d}(1/y,y,1,1,\ldots,1)$.
    Note that for any summand of $S_{k,d}(y)$, if it contains some $T_i^2$ as a factor, then it is of the form $x_{k,k}^{p_k}x_{k,k+1}^{p_k+i}\cdot T_i^2 = y^i T_i^2$.
    Hence the polynomial can be written as $S_{k,d}(y) = \sum_{i=0}^d c_i y^i T_i^2$ for some constants $c_1, \ldots, c_d$.
    From this we can rewrite the corresponding program inequality and express $T_d^2$:
    \begin{align}
        T_d^2 &\geq \frac{\sum_{i=0}^d c_i y^i T_i^2}{y^{(\alpha_{k+1,d}-\alpha_{k,d})d}} \label{eq:qs}\\
        T_d^2 &\geq \frac{\sum_{i=0}^{d-1} c_i y^i T_i^2}{y^{(\alpha_{k+1,d}-\alpha_{k,d})d}} + y^{(1-\alpha_{k+1,d}+\alpha_{k,d})d}c_d T_d^2 \nonumber\\
        T_d^2 &\geq \frac{1}{1-y^{(1-\alpha_{k+1,d}+\alpha_{k,d})d}c_d} \cdot \frac{\sum_{i=0}^{d-1} c_i y^i T_i^2}{y^{(\alpha_{k+1,d}-\alpha_{k,d})d}}. \nonumber
    \end{align}
    Note that $c_d$ are constants that do not depend on $T_i$.
    If the right hand side is negative, then it follows that the original inequality Eq.~(\ref{eq:qs}) does not hold.
    Thus we need to pick such $y$ that the right hand side is positive for all $d$.
    Hence we require that
    $$y < \left(\frac{1}{c_d}\right)^{\frac{1}{(1-\alpha_{k+1,d}+\alpha_{k,d})d}}.$$
    Since the right hand side is a constant that does not depend on $T_i$, we can pick such $y$ that satisfies this inequality for all $d$.
    Then it follows that all $T_i$ is also upper bounded by some constants (by induction on $i$).
    
    \item Now the question remains whether the optimal solution to $\text{OPT}(D,K,\{\alpha_{k,d}\})$ gives the optimal complexity.
    That is, is the complexity $T_1^n\cdots T_D^{n_D}$ given by the optimal solution of the optimization program such that $T_D$ is the smallest possible?
    
    Suppose that indeed the complexity of the algorithm is upper bounded by $T_1^n\cdots T_D^{n_D}$ for some $T_1$, $\ldots$, $T_D$.
    We will derive a corresponding feasible point for the optimization program.
    
    Examine the complexity of the algorithm for $n_1 = b_1 n, \ldots, n_D = b_D n$ for some fixed rational $b_i$ such that $b_1+\ldots+b_D=1$.
    The coefficients of the polynomials $P$ and $S_k$ give the complexity of the corresponding part of the algorithm (precalculation, and quantum search until the $k$-th level, respectively).
    Such coefficients are of the form $\left[ x_1^{\alpha_1 n}\cdots x_m^{\alpha_m n}\right]\prod_{d=1}^D p_d(x_1,\ldots,x_m)^{n_d}$.
    Let $A_d \coloneqq T_d$, if $p = P$, and $A_d \coloneqq T_d^2$, if $p = S_k$.
    Then we have
    $$A_1^{n_1}\cdots A_D^{n_D} \geq \left[ x_1^{\alpha_1 n}\cdots x_m^{\alpha_m n}\right]\prod_{d=1}^D p_d(x_1,\ldots,x_m)^{n_d} = (*)$$
    On the other hand,
    $$(*) = \Omega\left(\left(\inf_{x_1,\ldots,x_m>0} \prod_{d=1}^D \left(\frac{p_d(x_1,\ldots,x_m)}{x_1^{a_{1,d}}\cdots x_m^{a_{m,d}}}\right)^{b_d}\right)^n \right)$$
    when $n$ grows large by Theorem \ref{thm:asy} (\ref{itm:low}) (setting $a_{i,d} \coloneqq \alpha_{i,d} d$).
    Then, in the limit $n \to \infty$, we have
    \begin{equation} \label{eq:lb}
        A_1^{b_1}\cdots A_D^{b_D} \geq \inf_{x_1,\ldots,x_m>0} \prod_{d=1}^D \left(\frac{p_d(x_1,\ldots,x_m)}{x_1^{a_{1,d}}\cdots x_m^{a_{m,d}}}\right)^{b_d}.
    \end{equation}
    
    Now let $\Delta_{D-1}$ be the standard $D$-simplex defined by $\{ b \in \mathbb R^D \mid b_1+\ldots+b_D = 1, b_d \geq 0\}$.
    Define $F_d(x) \coloneqq \frac{p_d(x_1,\ldots,x_m)}{x_1^{a_{1,d}}\cdots x_m^{a_{m,d}}}$, and $F(b,x) \coloneqq \prod_{d=1}^D F_d(x)^{b_d}$ for $b \in \Delta_{D-1}$ and $x \in \mathbb R_{> 0}^m$.
    
    First, we prove that that for a fixed $b$, the function $F(b,x)$ is strictly convex.
    Examine the polynomial $p_d(x_1,\ldots,x_m)$, which is either $P_d(x)$ or $S_{k,d}(x_{k,k},\ldots,x_{k,K+1})$.
    It was shown in \cite{Good57}, Theorem 6.3 that if the coefficients of $p_d(x_1,\ldots,x_m)$ are non-negative, and the points $(c_1,\ldots,c_m)$, at which $$\left[x_1^{c_1}\cdots x_m^{c_m}\right]p_d(x_1,\ldots,x_m) > 0,$$
    linearly span an $m$-dimensional space, then $\log(F_d(x))$ is a strictly convex function.
    If $p_d = P_d$, then this property immediately follows, because there is just one variable $x$ and the polynomial is non-constant.
    For $p_d = S_{k,d}$, the polynomial consists of summands of the form $T_{c_{k+1}-c_{k}}^2 x_{k,k}^{c_k}x_{k,k+1}^{c_{k+1}}\cdots x_{k,K+1}^{c_{K+1}}$, for $c_k \leq c_{k+1} \leq \ldots \leq c_{K+1}$.
    Note that the coefficient $T_{c_{k+1}-c_{k}}^2$ is positive.
    Thus the points $(c_k,\ldots,c_{K+1}) = (0,\ldots,0,1,\ldots,1)$ indeed linearly span a $(K-k+2)$-dimensional space.
    Therefore, $\log(F_d(x))$ is strictly convex.
    Then also the function $\sum_{d=1}^D b_d \log(F_d(x)) = \log(F(b,x))$ is strictly convex (for fixed $b$), as the sum of strictly convex functions is convex.
    Therefore, $F(b,x)$ is strictly convex as well.
    
    Therefore, the argument $\hat x(b)$ achieving $\inf_{x \in \mathbb R_{> 0}^m} F(b,x)$ is unique.
    Let $\hat F_d(b) \coloneqq F_d(\hat x(b))$ and define $D$ subsets of the simplex $C_d := \{b \in \Delta_{D-1} \mid \hat F_d(b) \leq A_d \}$.
    We will apply the following result for these sets:
    \begin{theorem}[Knaster-Kuratowski-Mazurkiewicz lemma \cite{KKM29}]
    Let the vertices of $\Delta_{D-1}$ be labeled by integers from $1$ to $D$.
    Let $C_1$, $\ldots$, $C_D$ be a family of closed sets such that for any $I \subseteq [D]$, the convex hull of the vertices labeled by $I$ is covered by $\cup_{d \in I} C_d$.
    Then $\cap_{d \in [D]} C_d \neq \varnothing$.
    \end{theorem}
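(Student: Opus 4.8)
The plan is to derive the KKM lemma from Sperner's lemma, which is the classical and most self-contained route. Recall Sperner's lemma: for any triangulation of $\Delta_{D-1}$ and any labeling of its vertices by elements of $[D]$ with the property that a triangulation vertex lying on the minimal original face spanned by the vertices indexed by $I\subseteq[D]$ receives a label in $I$, some sub-simplex of the triangulation has its $D$ vertices carrying all $D$ distinct labels. I would take Sperner's lemma as known; its usual proof is a parity/counting argument by induction on $D$, tallying the fully-labeled $(D-1)$-dimensional faces that lie on the boundary.

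The first real step is to manufacture a Sperner labeling from the closed sets $C_1,\ldots,C_D$. Fix a sequence of triangulations of $\Delta_{D-1}$ whose mesh tends to $0$ (for instance iterated barycentric subdivisions, whose mesh shrinks geometrically). In a fixed triangulation, each vertex $v$ has barycentric coordinates supported on an index set $I(v)$, namely the labels of the vertices of the smallest original face containing $v$. By the covering hypothesis applied to $I(v)$, the face $\mathrm{conv}\{e_d : d\in I(v)\}$ is contained in $\bigcup_{d\in I(v)} C_d$, so $v\in C_d$ for at least one $d\in I(v)$; assign $v$ such a $d$ as its label. This labeling respects the Sperner boundary condition by construction, so Sperner's lemma produces, for each triangulation in the sequence, a sub-simplex with vertices $v_1,\ldots,v_D$ satisfying $v_d\in C_d$ for every $d\in[D]$.

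The final step is a compactness argument: since $\Delta_{D-1}$ is compact, along a subsequence the sub-simplices converge to a single point $p$, and because the mesh tends to $0$ this forces $v_d\to p$ for every $d$. Each $C_d$ is closed and contains its corresponding convergent sequence, hence $p\in C_d$ for all $d$, so $p\in\bigcap_{d=1}^{D}C_d$ and the intersection is nonempty. I expect the only genuine obstacle (granting Sperner's lemma) to be the careful verification that the "support of the barycentric coordinates" labeling is a legitimate Sperner labeling — one must check that $I(v)$ is exactly that support and that the covering hypothesis is invoked for all $I$, including singletons, so that a valid label always exists; the remainder is routine. An essentially equivalent alternative would be to deduce the statement from Brouwer's fixed point theorem, but the Sperner route avoids importing that machinery.
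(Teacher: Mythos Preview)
Your argument is correct and is the classical derivation of the KKM lemma from Sperner's lemma; the labeling you build is indeed a valid Sperner labeling, and the compactness/closedness step is handled properly.

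However, there is nothing to compare against: the paper does not prove this theorem. It is stated as a cited classical result (the Knaster--Kuratowski--Mazurkiewicz lemma, \cite{KKM29}) and used as a black box in the optimality argument for the optimization program. So your proposal supplies a proof where the paper simply invokes the literature; the Sperner-based route you outline is exactly the standard one and would be an appropriate appendix-level justification if one were desired.
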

    
    We check that the conditions of the lemma apply to our sets.
    First, note that $F(b,x)$ is continuous and strictly convex for a fixed $b$, hence $\hat x(b)$ is continuous and thus $\hat F_d(b)$ is continuous as well.
    Therefore, the ``threshold'' sets $C_d$ are closed.
    
    Secondly, let $I \subseteq [D]$ and examine a point $b$ in the convex hull of the simplex vertices labeled by $I$.
    For such a point, we have $b_d = 0$ for all $d \not\in I$.
    For the indices $d \in I$, for at least one we should have $\hat F_d(b) \leq A_d$, otherwise the inequality in Eq.~(\ref{eq:lb}) would be contradicted.
    Note that it was stated only for rational $b$, but since $\hat F_d(b)$ are continuous and any real number can be approximated with a rational number to arbitrary precision, the inequality also holds for real $b$.
    Thus indeed any such $b$ is covered by $\cup_{d \in I} C_d$.
    
    Therefore, we can apply the lemma and it follows that there exists a point $b \in \Delta_{D-1}$ such that $A_d \geq \hat F_d(b)$ for all $d \in [D]$.
    The corresponding point $\hat x(b)$ is a feasible point for the examined set of inequalities in the optimization program.
\end{itemize}

\subsubsection{Total complexity}
Finally, we will argue that there exists such a choice for $\{\alpha_{k,d}\}$ that $$\OPT(D,K,\{\alpha_{k,d}\}) < D+1.$$
Examine the algorithm with only $K=1$; the optimal complexity for any $K > 1$ cannot be larger, as we can simulate $K$ levels with $K+1$ levels by setting $\alpha_{2,d}=\alpha_{1,d}+\epsilon$ for $\epsilon \to 0$ for all $d \in [D]$.
For simplicity, denote $\alpha_d := \alpha_{1,d}$.

\begin{itemize}
\item Now examine the precalculation inequalities in $\OPT(D,1,\{\alpha_{1,d}\})$.
For any values of $\alpha_{1,d}$, if we set $x=1$, we have
$\frac{P_d(x)}{x^{\alpha_d d}} = \frac{\sum_{i=0}^d x^i} {x^{\alpha_d d}} = d+1$.
The derivative is equal to
$$\left( \frac{\sum_{i=0}^d x^i} {x^{\alpha_d d}} \right)' = \frac{x^{\alpha_d d} \cdot \sum_{i=1}^d i x^{i-1} - \alpha_d d x^{\alpha_d d - 1} \cdot \sum_{i=0}^d x^i}{x^{2\alpha_d d}} = \frac{d(d+1)}{2}-\alpha_d d(d+1)$$
at point $x = 1$.
Thus when $\alpha_d < \frac{1}{2}$, the derivative is positive.
It means that for arbitrary $\alpha_d < \frac{1}{2}$, there exists some $x(d)$ such that $\frac{P_d(x)}{x^{\alpha_d d}} < d+1$, and $\frac{P_d(x)}{x^{\alpha_d d}}$ monotonically grows on $x \in [x(d),1]$.
Thus, for arbitrary setting of $\{\alpha_d\}$ such that $\alpha_d < \frac{1}{2}$ for all $d \in [D]$, we can take  $\hat x \coloneqq \max_{d \in [D]} \{ x(d) \}$ as the common parameter, in which case all $\frac{P_d(\hat x)}{\hat x^{\alpha_d d}} < d+1$.

\item Now examine the set of the quantum search inequalities.
Let $y \coloneqq x_{1,1}$ and $z \coloneqq x_{1,2}$ for simplicity.
Then such inequalities are given by
$$T_d^2 \geq S_{1,d}(y,z) = \frac{\sum_{i=0}^d T_i^2 \sum_{p=0}^{d-i} y^p z^{p+i}}{y^{\alpha_d d} z^{d/2}}.$$
Now restrict the variables to condition $yz = 1$.
In that case, the polynomial above simplifies to
$$S_{1,d}(z) \coloneqq \frac{\sum_{i=0}^d T_i^2 \sum_{p=0}^{d-i} z^i}{y^{\frac d 2 + d\left(\alpha_d-\frac{1}{2}\right)} z^{d/2}} = \left(\sum_{i=0}^d T_i^2 (d-i+1) z^i\right)\cdot z^{d\left(\alpha_d-\frac{1}{2}\right)}.$$

We now find such values of $z$ and $\alpha_1, \ldots, \alpha_D$ so that $S_{1,d}(z) < (d+1)^2$ for all $d \in [D]$, where $T_1, \ldots, T_D$ are any values such that $T_d \leq d+1$ for all $d \in [D]$.
Denote $\hat S_{1,d}(z)$ to be $S_{1,d}(z)$ with $T_d = d+1$ for all $d \in [D]$, then $\hat S_{1,d}(z) < (d+1)^2$ as well.
Now let $T_d$ be the maximum of $\frac{P_d(\hat x)}{\hat x^{\alpha_d d}}$ from the previous bullet and $\hat S_{1,d}(z)$.
Then, $T_d < d+1$, and we have both $T_d \geq \frac{P_d(\hat x)}{\hat x^{\alpha_d d}}$ and $T_d^2 \geq \hat S_{1,d}(z) \geq S_{1,d}(z)$, since $S_{1,d}(z)$ cannot become larger when $T_d$ decrease.

Now we show how to find such $z$ and $\alpha_1, \ldots, \alpha_D$.
Examine the sum in the polynomial $\hat S_{1,d}(z)$
$$\sum_{i=0}^d (i+1)^2 (d-i+1) z^i = (d+1) + \sum_{i=1}^d (i+1)^2 (d-i+1) z^i.$$
Examine the second part of the sum.
We can find a sufficiently small value of $z \in (0,1)$ such that this part is smaller than any value $\epsilon > 0$ for all $d \in [D]$.
Now, let $\alpha_d = \frac{1}{2}-\frac{c}{d}$ for some constant $c > 0$.
Then
$$z^{d\left(\alpha_d-\frac{1}{2}\right)} = z^{-c}$$
for all $d \in [D]$.
Thus, the total value of the sum now is at most $(d+1+\epsilon)z^{-c}$.
As $z^{-1} > 1$, take a sufficiently small value of $c$ so that this value is at most $(d+1)^2$.
\end{itemize}

Therefore, putting all together, we have the main result:
\begin{theorem} \label{thm:main}
There exists a bounded-error quantum algorithm that solves the path in the $n$-dimensional lattice problem using $\widetilde O(T_D^n)$ queries, where $T_D < D+1$.
The optimal value of $T_D$ can be found by optimizing $\OPT(D,K,\{\alpha_{k,d}\})$ over $K$ and $\{\alpha_{k,d}\}$.
\end{theorem}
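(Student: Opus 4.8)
The plan is to assemble Theorem~\ref{thm:main} from the pieces developed in Section~\ref{sec:query}, treating the theorem as essentially a corollary of the analysis already carried out. First I would recall that Algorithm~\ref{alg:main}, for any fixed choice of $K$ and rational parameters $\{\alpha_{k,d}\}$, has query complexity governed by the coefficients in Eq.~(\ref{eq:prec}) and Eq.~(\ref{eq:src}); by the argument in Section~\ref{sec:poly} the total query complexity is, up to a $\poly(n)$ factor coming from the $O(\log n)$ recursion depth and the constant number of nested VTS applications, at most $2K$ times the maximum of these coefficients. Invoking Theorem~\ref{thm:asy}(\ref{itm:upp}) with $b_i \coloneqq n_i/n$ and $a_{i,d} \coloneqq \alpha_{i,d} d$, each such coefficient is bounded by $(T_1^{n_1}\cdots T_D^{n_D})$ (respectively its square) whenever $(T_1,\ldots,T_D)$ is a feasible point of $\OPT(D,K,\{\alpha_{k,d}\})$, with the floor-rounding of the layer weights $W_k$ and the degenerate case $W_k=W_{k+1}$ both contributing only constant factors as already shown. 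Hence the algorithm runs in $\widetilde O\bigl((T_1^{n_1}\cdots T_D^{n_D})\bigr)$ queries, and specializing to $n=n_D$ (i.e.\ $b_D=1$) gives $\widetilde O(T_D^n)$.

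Next I would invoke the optimality discussion: the KKM-lemma argument shows that any upper bound of the form $T_1^{n_1}\cdots T_D^{n_D}$ on the complexity yields a feasible point of the program, so optimizing $\OPT(D,K,\{\alpha_{k,d}\})$ over $K$ and $\{\alpha_{k,d}\}$ recovers the genuinely optimal exponential base $T_D$ achievable by this algorithm — this justifies the second sentence of the theorem. Finally, to establish the strict inequality $T_D < D+1$, I would cite the ``Total complexity'' paragraph: it suffices to exhibit one feasible point with $T_d < d+1$ for all $d$. Taking $K=1$, one first chooses a common $\hat x<1$ making every $\frac{P_d(\hat x)}{\hat x^{\alpha_d d}} < d+1$ (possible because the derivative at $x=1$ is positive when $\alpha_d<1/2$); then, restricting the quantum-search variables to $yz=1$ and writing $\hat S_{1,d}(z)=(d+1)+\bigl(\sum_{i=1}^d (i+1)^2(d-i+1)z^i\bigr)z^{-c}$ with $\alpha_d = \tfrac12 - \tfrac{c}{d}$, one picks $z\in(0,1)$ small enough to make the tail below any $\epsilon$ and then $c>0$ small enough that $(d+1+\epsilon)z^{-c} \le (d+1)^2$ for all $d\in[D]$. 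Setting each $T_d$ to the max of the two resulting bounds gives a feasible point with $T_d<d+1$, proving $\OPT(D,K,\{\alpha_{k,d}\}) < D+1$.

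I expect the main obstacle to be purely expository rather than mathematical: every ingredient is already proved above, so the work is in stitching them together cleanly and being careful about which quantities are squared (the $S_k$ coefficients bound $T(n_1,\ldots,n_D)^2$, not $T(n_1,\ldots,n_D)$) and about the order of quantifiers — one fixes $K$ and $\{\alpha_{k,d}\}$ first, derives the complexity as a function of those parameters, and only then optimizes. A minor subtlety worth flagging explicitly is that the $\poly(n)$ and constant overheads are absorbed into the $\widetilde O(\cdot)$ notation, which is legitimate since $T_D>1$, so $\poly(n)\cdot T_D^n = \widetilde O(T_D^n)$; and that the bounded-error guarantee is inherited from the constant success probability of VTS in Theorem~\ref{thm:vts}, composed over the $O(\log n)$ recursion levels and $K+1$ nested searches, which again costs only a $\poly(n)$ repetition overhead to boost back to constant error.
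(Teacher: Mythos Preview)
Your proposal is correct and follows essentially the same route as the paper: the theorem is assembled as a corollary of the preceding analysis, first bounding the query complexity via Theorem~\ref{thm:asy}(\ref{itm:upp}) applied to the coefficients in Eqs.~(\ref{eq:prec})--(\ref{eq:src}) for any feasible point of $\OPT$, then invoking the KKM argument for optimality, and finally exhibiting the $K=1$ feasible point with $T_d<d+1$ exactly as in the ``Total complexity'' paragraph. One small typo to fix: in your displayed expression for $\hat S_{1,d}(z)$ the factor $z^{-c}$ should multiply the entire bracket including the constant term $(d{+}1)$, not just the tail sum --- your subsequent bound $(d{+}1{+}\epsilon)z^{-c}\le(d{+}1)^2$ is the correct one and shows you have the argument right.
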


\subsection{Complexity for small \texorpdfstring{$\boldsymbol{D}$}{D}} \label{sec:smalld}

To find the estimate on the complexity for small values of $D$ and $K$, we have optimized the value of $\OPT(D,K,\{\alpha_{k,d}\})$ using Mathematica (minimizing over the values of $\alpha_{k,d}$).
Table \ref{tbl:numerical} compiles the results obtained by the optimization.
In case of $D=1$, we recovered the complexity of the quantum algorithm from \cite{ABIKPV19} for the path in the hypercube problem, which is a special case of our algorithm.

\begin{table}[ht]
\begin{center}
\begin{tabular}{c||c|c|c|c|c|c}
        & $D=1$     & $D=2$     & $D=3$     & $D=4$     & $D=5$     & $D=6$ \\ \hline\hline
$K=1$   & $1.86793$ & $2.76625$ & $3.68995$ & $4.63206$ & $5.58735$ & $6.55223$  \\ \hline
$K=2$   & $1.82562$ & $2.67843$ & $3.55933$ & $4.46334$ & $5.38554$ & $6.32193$  \\ \hline
$K=3$   & $1.81819$ & $2.66198$ & $3.53322$ & $4.42759$ & $5.34059$ & $6.26840$  \\ \hline
$K=4$   & $1.81707$ & $2.65939$ & $3.52893$ & $4.42148$ & $5.33263$ & $6.25862$  \\ \hline
$K=5$   & $1.81692$ & $2.65908$ & $3.52836$ & $4.42064$ & $5.33149$ & $6.25720$
\end{tabular}
\end{center}
\caption{The complexity of the quantum algorithm for small values of $D$ and $K$.}
\label{tbl:numerical}
\end{table}

For $K=1$, we were able to estimate the complexity for up to $D=18$.
Figure \ref{fig:k1} shows the values of the difference between $D+1$ and $T_D$ for this range. 
\begin{figure}[ht]
\centering
\begin{tikzpicture}[scale=0.85]
\begin{axis}[
    width=0.7\textwidth,
    height=\axisdefaultheight,
    axis lines = left,
    xlabel={$D$},
    ylabel={$D+1-T_D$},
    xmin=0, xmax=18,
    ymin=0, ymax=0.6,
    xtick={5,10,15},
    ytick={0.1,0.2,0.3,0.4,0.5,0.6},
    ymajorgrids=true,
    grid style=dashed,
]

\addplot[
    only marks,
    color=teal,
    mark size=2pt,
    ]
    coordinates {
    (1, 2-1.8679291102114184)
    (2, 3-2.7662504942190176)
    (3, 4-3.6899390889963155)
    (4, 5-4.632054318702819)
    (5, 6-5.5873596338697835)
    (6, 7-6.55222537443972)
    (7, 8-7.524152471011434)
    (8, 9-8.501400896330647)
    (9, 10-9.482736621759516)
    (10, 11-10.467266858165571)
    (11, 12-11.454333012714129)
    (12, 13-12.443440344113888)
    (13, 14-13.43421095829736)
    (14, 15-14.426351815325729)
    (15, 16-15.419632547557956)
    (16, 17-16.41386980382098)
    (17, 18-17.408916012469916)
    (18, 19-18.404651188768383)
    };
\end{axis}
\end{tikzpicture}
    
    \caption{The advantage of the quantum algorithm over the classical for $K=1$.}
    \label{fig:k1}
\end{figure}

Our Mathematica code used for determining the values of $T_D$ can be accessed at \url{https://doi.org/10.5281/zenodo.4603689}.
In Appendix \ref{app:numerical}, we list the parameters for the case $K=1$.

\subsection{Lower bound for general \texorpdfstring{$\boldsymbol{D}$}{D}} \label{sec:lb}

Even though Theorem \ref{thm:main} establishes the quantum advantage of the algorithm, it is interesting how large the speedup can get for large $D$.
In this section, we prove that the speedup cannot be substantial, more specifically:
\begin{theorem} \label{thm:lb}
For any fixed integers $D \geq 1$ and $K \geq 1$, Algorithm \ref{alg:main} performs $\widetilde\Omega\left(\left(\frac{D+1}{\e}\right)^n\right)$ queries on the lattice $Q(D,n)$.
\end{theorem}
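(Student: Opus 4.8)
The plan is to reduce the statement to the optimization program $\OPT(D,K,\{\alpha_{k,d}\})$ defined above. The analysis of Section~\ref{sec:query} shows that the query complexity of Algorithm~\ref{alg:main} on $Q(D,n)$ run with parameters $\{\alpha_{k,d}\}$ has exponential growth rate $\OPT(D,K,\{\alpha_{k,d}\})$ (up to the rounding issues already settled there), so it is enough to prove that every feasible point of that program satisfies $T_D\geq\frac{D+1}{\e}$. I would in fact prove the stronger statement that in any feasible point $T_d\geq\frac{d+1}{\e}$ for all $0\leq d\leq D$, by induction on $d$; the case $d=0$ is $T_0=1\geq\frac1{\e}$, and, since increasing $K$ only decreases the optimum, the small values of $d$ for $K\leq5$ are covered by Table~\ref{tbl:numerical}. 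For the inductive step, note that the program's constraints involving $T_d$ force
$$T_d\geq\inf_{x>0}\frac{P_d(x)}{x^{\alpha_{1,d}d}}, \qquad T_d^2\geq\inf_{\{x_{k,j}\}>0}\frac{S_{k,d}(x_{k,k},\dots,x_{k,K+1})}{x_{k,k}^{\alpha_{k,d}d}\cdots x_{k,K+1}^{\alpha_{K+1,d}d}}\quad(k\in[K]),$$
and also $T_d\geq1$.

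To exploit these, I would use the induction hypothesis to replace every coefficient $T_i^2$ with $i<d$ inside $S_{k,d}$ by $\big(\frac{i+1}{\e}\big)^2$, and drop the single monomial of $S_{k,d}$ carrying the factor $T_d^2$ (allowed because all coefficients are non-negative, and this is what removes the self-reference). Each infimum above is then of the form $\inf_{\mathbf x>0}p(\mathbf x)\,\mathbf x^{-\mathbf a}$ with $p$ a polynomial with non-negative coefficients; the weighted arithmetic--geometric mean inequality (equivalently, Legendre duality) bounds it below by $\prod_{\mathbf c}(c_{\mathbf c}/\lambda_{\mathbf c})^{\lambda_{\mathbf c}}$ for any probability vector $(\lambda_{\mathbf c})$ supported on the monomials of $p$ whose barycentre equals $\mathbf a$, where $c_{\mathbf c}$ is the coefficient of the monomial $\mathbf c$. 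From the precalculation constraint this reads $\ln T_d\geq H(\mu)$ for every distribution $\mu$ on $\{0,\dots,d\}$ of mean $\alpha_{1,d}d$ ($H$ = Shannon entropy), and from the $k$-th search constraint $\ln T_d\geq\mathbb{E}_\lambda[\ln(\Delta+1)]-1+\tfrac12H(\lambda)$ for every distribution $\lambda$ on chains $0\leq p_k\leq p_{k+1}\leq\cdots\leq p_{K+1}\leq d$ with $\mathbb{E}_\lambda[p_j]=\alpha_{j,d}d$ for all $j$ and $\Delta:=p_{k+1}-p_k\leq d-1$. The constant $\frac1{\e}$ appears through Stirling's bound $m!\geq(m/\e)^m$, which gives $\mathbb{E}[\ln(\Delta+1)]\geq\ln(m+1)-1$ when $\Delta$ is uniform on $\{0,\dots,m\}$.

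It then remains to show that, whatever the admissible choice $0<\alpha_{1,d}<\cdots<\alpha_{K,d}<\tfrac12$, at least one of the two bounds reaches $\ln(d+1)-1$; this is a dichotomy on how deep the precalculation goes. If $\alpha_{1,d}$ is at least the threshold $\gamma_0(d)$ at which $\inf_{x>0}P_d(x)x^{-\alpha_{1,d}d}$ equals $\frac{d+1}{\e}$, the precalculation bound is already enough. Otherwise the $K$ gaps $\alpha_{j+1,d}-\alpha_{j,d}$ sum to $\tfrac12-\alpha_{1,d}>\tfrac12-\gamma_0(d)$, so some gap $g=\alpha_{k+1,d}-\alpha_{k,d}$ is bounded below by a positive constant depending only on $K$; for that $k$ I would choose $\lambda$ with $\Delta$ essentially uniform on $\{0,\dots,m\}$ for $m\approx2gd$ (with the remaining coordinates of the chain tuned to realise the prescribed means), so that $\mathbb{E}_\lambda[\ln(\Delta+1)]\geq\ln(m+1)-1$ and $H(\lambda)\geq H(\Delta)=\ln(m+1)$, giving $\ln T_d\geq\tfrac32\ln(m+1)-2\geq\ln(d+1)-1$ as soon as $d$ exceeds a constant depending only on $K$. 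For each fixed $K$ this leaves only finitely many values of $d$, which are checked directly (using the computations of Section~\ref{sec:smalld} and monotonicity in $K$).

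The hard step will be the dichotomy in the last paragraph: since the layer weights $\alpha_{k,d}$ are chosen by the algorithm and not by us, one has to show that \emph{every} parameter choice that makes the precalculation cheap is nevertheless forced to pay, somewhere in the recursion, for a jump $\Delta$ which is a constant fraction of $d$; obtaining this bound uniformly in $d$, rather than only for $d$ large compared with $K$, is exactly what makes the inductive reuse of the smaller $T_i$ inside $S_{k,d}$ and the careful choice of $\lambda$ necessary.
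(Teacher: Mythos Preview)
Your high-level dichotomy---bound the precalculation when $\alpha_{1,D}$ is large, and otherwise exploit a large gap between consecutive layers in the search---matches the paper's strategy exactly (there the threshold is $\alpha_{1,D}=\tfrac14$, handled by Lemmas~\ref{thm:plb} and~\ref{thm:slb}). But the route you take to the search bound is genuinely different: you pass through the optimization program and use an AM--GM/entropy dual bound on $\inf_{\mathbf x>0}S_{k,d}(\mathbf x)\mathbf x^{-\mathbf a}$, whereas the paper never touches the program in its lower bound and instead counts, directly and combinatorially, how many vertex sequences the nested VTS must examine.

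The proposal has a real gap in the search half of the dichotomy. Your chosen witness distribution $\lambda$ varies only $\Delta=p_{k+1}-p_k$ (uniform on $\{0,\dots,m\}$ with $m\approx 2gd$) and freezes the other coordinates up to mean-tuning. This yields $\ln T_d\geq\tfrac32\ln(m+1)-2$, but with $g\geq \tfrac{1}{4K}$ that inequality only reaches $\ln(d+1)-1$ once $d$ is of order $K^3$. You then say the remaining finitely many $d$ are ``checked directly'' via Table~\ref{tbl:numerical} and monotonicity in $K$; this does not work, because the number of residual values grows with $K$ while the table stops at $K=5$, and monotonicity in $K$ goes the wrong way for a lower bound (increasing $K$ \emph{decreases} $T_d$, so verifying $K\leq 5$ says nothing about $K=6,7,\ldots$). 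You acknowledge in your final paragraph that getting the bound uniformly in $d$ is the hard step, but the proposal does not actually supply the ``careful choice of $\lambda$'' that would close it.

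What the paper does instead is equivalent, in your language, to choosing a $\lambda$ in which \emph{every} coordinate $p_j$ of the chain contributes entropy, not only $\Delta$. Concretely (Lemma~\ref{thm:slb}), one picks $v^{(K+1)}$ with each value $0,1,\ldots,D$ occurring $n/(D+1)$ times (so $N_{K+1}=\widetilde\Omega((D+1)^n)$ by the multinomial), and at each subsequent layer re-randomizes a $(\gamma_i-\gamma_{i+1})$-fraction of each block of coordinates uniformly within its range; the product $\prod_d(d+1)$ becomes $(D+1)!\geq((D+1)/\e)^{D+1}$ by Stirling, and the telescoping of the $\gamma_i$ makes the final bound independent of both the individual $\alpha_{k,D}$ and of $K$. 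If you want to salvage your entropy argument, the fix is to let $\lambda$ be a product over the successive increments $p_{j+1}-p_j$ (each uniform on an interval of the correct mean), so that $H(\lambda)$ picks up a contribution from every level and the Stirling factor appears once per level; the self-reference you removed by dropping the $T_d^2$ monomial is then harmless because the contribution of the single chain with $\Delta=d$ is negligible. You would also still owe a proof that $\gamma_0(d)\leq\tfrac14$ uniformly in $d$---the paper spends all of Lemma~\ref{thm:plb} on precisely this.
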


\begin{proof}
The structure of the proof is as follows.
First, we prove that if $\alpha_{1,D} > \frac{1}{4}$, then the number of queries used in the algorithm during the precalculation step \ref{itm:sc1} is at least $\widetilde\Omega((0.664554(D+1))^n)$ queries (Lemma \ref{thm:plb} in Appendix \ref{app:lba}).
Then, we prove that if $\alpha_{1,D} \leq \frac{1}{4}$, then the quantum search part in steps \ref{itm:sc2} and \ref{itm:sc3} performs at least $\widetilde\Omega\left(\left(\frac{D+1}{\e}\right)^n\right)$ queries (Lemma \ref{thm:slb} in Appendix \ref{app:lba}).
Therefore, depending on whether $\alpha_{1,D} > \frac{1}{4}$, one of the precalculation or the quantum search performs $\widetilde\Omega((c(D+1))^n)$ queries for constant $c$, and the claim follows, since $\frac{1}{\e} < 0.664554$.
\end{proof}

\section{Time complexity} \label{sec:time}

In this section we examine a possible high-level implementation of the described algorithm and argue that there exists a quantum algorithm with the same exponential time complexity as the query complexity.

Firstly, we assume the commonly used QRAM model of computation that allows to access $N$ memory cells in superposition in time $O(\log N)$ \cite{GLM08}.
This is needed when the algorithm accesses the precalculated values of $\text{dp}$.
Since in our case $N$ is always at most $(D+1)^n$, this introduces only a $O(\log ((D+1)^n)) = O(n)$ additional factor to the time complexity.

The main problem that arises is the efficient implementation of VTS.
During the VTS execution, multiple quantum algorithms should be performed in superposition.
More formally, to apply VTS to algorithms $\mathcal A_1$, $\ldots$, $\mathcal A_N$, we should specify the \emph{algorithm oracle} that, given the index of the algorithm $i$ and the time step $t$, applies the $t$-th step of $\mathcal A_i$ (see Section 2.2 of \cite{CJOP20} for formal definition of such an oracle and related discussion).
If the algorithms $\mathcal A_i$ are unstructured, the implementation of such an oracle may take even $O(N)$ time (if, for example, all of the algorithms perform a different gate on different qubits at the $t$-th step).

We circumvent this issue by showing that it is possible to use only Grover's search to implement the algorithm, retaining the same exponential complexity (however, the sub-exponential factor in the complexity will increase).
Nonetheless, the use of VTS in the query algorithm not only achieves a smaller query complexity, but also allowed to prove the estimate on the exponential complexity, which would not be so amiable for the algorithm that uses Grover's search.

\subsection{Implementation}

The main idea of the implementation is to fix a ``class'' of vertices for each of the $2K+1$ layers examined by the algorithm, and do this for all $r = O(\log n)$ levels of recursion.
We will essentially define these classes by the number of coordinates of a vertex in such layer that are equal to $0$, $1$, $\ldots$, $D$.
Then, we can first fix a class for each layer for all levels of recursion classically.
We will show that there are at most $n^{D^2}$ different classes we have to consider at each layer.
Since there are $2K+1$ layers at one level of recursion, and $O(\log n)$ levels of recursion, this classical precalculation will take time $n^{O(D^2 K \log n)}$.
For each such choice of classes, we will run a quantum algorithm that checks for the path in the hyperlattice constrained on these classes of the vertices the path can go through.
The advantage of the quantum algorithm will come from checking the permutations of the coordinates using Grover's search.
The time complexity of the quantum part will be $n^{O(K \log n)} T_D^n$ ($T_d^n$ as in the query algorithm, and $n^{O(K \log n)}$ from the logarithmic factors in Grover's search), therefore the total time complexity will be $n^{O(D^2 K \log n)}\cdot n^{O(K \log n)} T_D^n=n^{O(D^2 K \log n)} T_D^n$, thus the exponential complexity stays the same.

\subsubsection{Layer classes}
In all of the applications of VTS in the algorithm, we use it in the following scenario: given a vertex $x$, examine all vertices $y$ with fixed weight $|y| = W$ such that $y < x$ (note that VTS over the middle layer $\mathcal L_{K+1}$ can be viewed in this way by taking $x$ to be the final vertex in the lattice, and VTS over the vertices in the layers symmetrical to $\mathcal L_{K+1}$ can be analyzed similarly).

We define a \emph{class} of $y$'s (in respect to $x$) in the following way.
Let $n_{a,b}$ be the number of $i \in [n]$ such that $y_i = a$ and $x_i = b$, where $a \leq b$.
All $y$ in the same class have the same values of $n_{a,b}$ for all $a$, $b$.
Also define a \emph{representative} of a class as a single particular $y$ from that class; we will define it as the lexicographically smallest such $y$.

As mentioned in the informal description above, we can fix the classes for all layers examined by the quantum algorithm and generate the corresponding representatives classically.
Note that in our quantum algorithm, recursive calls work with the sublattice constrained on the vertices $s \leq y \leq t$ for some $s < t$, so for each position of $y_i$ we should have also $y_i \geq s_i$; however, we can reduce it to lattice $0^n \leq y' \leq x$, where $x_i \coloneqq t_i-s_i$ for all $i$.
To get the real value of $y$, we generate a representative $y'$, and set $y_i \coloneqq y_i'+s_i$.

Consider an example for $D=2$.
The following figure illustrates the representative $y$ (note that the order of positions of $x$ here is lexicographical for simplicity, but it may be arbitrary).
\begin{figure}[H]
\centering
\begin{align*}
    x&=00\dotline[0.58cm]0\hspace{0.06cm}11\dotline[1.8cm]1\hspace{0.07cm}22\dotline[3cm]2\\
    y&=\underbrace{00\ldots0}_{n_{0,0}}\underbrace{00\ldots0}_{n_{0,1}}\underbrace{11\ldots1}_{n_{1,1}}\underbrace{00\ldots0}_{n_{0,2}}\underbrace{11\ldots1}_{n_{1,2}}\underbrace{22\ldots2}_{n_{2,2}}
\end{align*}
\caption{The (lexicographically smallest) representative for $y$ for $D=2$.}
\end{figure}

Note that $n_{a,b}$ can be at most $n$.
Therefore, there are at most $n^{D^2}$ choices for classes at each layer.
Thus the total number of different sets of choices for all layers is $n^{O(D^2 K \log n)}$.
For each such set of choices, we then run a quantum algorithm that checks for a path in the sublattice constrained on these classes.

\subsubsection{Quantum algorithm}

The algorithm basically implements Algorithm \ref{alg:main}, with VTS replaced by Grover's search.
Thus we only describe how we run the Grover's search.
We will also use the analysis of Grover's search with multiple marked elements.

\begin{theorem}[Grover's search]
Let $f : S \to \{0,1\}$, where $|S| = N$.
Suppose we can generate a uniform superposition $\frac{1}{\sqrt N} \sum_{x \in S} \ket{x}$ in $O(\poly(\log N))$ time, and there is a bounded-error quantum algorithm $\mathcal A$ that computes $f(x)$ with time complexity $T$.
Suppose also that there is a promise that either there are at least $k$ solutions to $f(x) = 1$, or there are none.
Then there exists a bounded-error quantum algorithm that runs in time $O(T \log N \sqrt{N/k})$, and detects whether there exists $x$ such that $f(x) = 1$.
\end{theorem}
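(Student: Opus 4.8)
The plan is to derive this statement from the standard amplitude amplification machinery, treating the inner algorithm $\mathcal A$ as a subroutine whose output register we amplify on. First I would recall the exact quantum search / amplitude amplification result: given a procedure that prepares a state with success amplitude $\sqrt{p}$, one can boost the success probability to $\Omega(1)$ using $O(1/\sqrt{p})$ calls to the procedure and its inverse (Brassard--Høyer--Mosca--Tapp). Here the ``procedure'' is: prepare the uniform superposition $\frac{1}{\sqrt N}\sum_{x\in S}\ket{x}$ (cost $O(\poly(\log N))$), then run $\mathcal A$ coherently to write $f(x)$ into an ancilla (cost $T$). If there are $k$ solutions, the probability of measuring the ancilla as $1$ is at least $k/N$, so $p \geq k/N$ and the number of iterations is $O(\sqrt{N/k})$. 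Each iteration costs $O(T + \poly(\log N))$, and inverting $\mathcal A$ costs the same up to constants, so the total time is $O\!\left(\sqrt{N/k}\,(T + \poly(\log N))\right)$, which is subsumed by the claimed $O(T\log N\sqrt{N/k})$ bound.

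The one genuine subtlety is that $\mathcal A$ is only a \emph{bounded-error} algorithm, so running it coherently does not produce the clean state $\sum_x \sqrt{1/N}\,\ket{x}\ket{f(x)}$ but rather $\sum_x \sqrt{1/N}\,\ket{x}\ket{\psi_x}$ where $\ket{\psi_x}$ has overlap only $\geq 1-\epsilon$ (for some constant $\epsilon<1/2$) with $\ket{f(x)}$. I would handle this in the standard way: first amplify $\mathcal A$'s own success probability to $1 - \delta$ for a suitably small constant $\delta$ by $O(\log(1/\delta))$ repetitions with majority vote — this only changes $T$ by a constant factor and is absorbed into the $O(\cdot)$. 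With $\mathcal A$ now correct with probability $1-\delta$, the marked-state amplitude is within $O(\sqrt\delta)$ of the ideal one, and known robustness statements for amplitude amplification (e.g. the analysis in \cite{Amb10} or \cite{CJOP20}) show the search still succeeds with constant probability; the leftover ``garbage'' in the ancilla contributes only lower-order terms to the measured probability, so the promise (either $\geq k$ solutions or none) is still distinguished correctly. If there are no solutions, the measured ancilla is $1$ with probability $O(\delta)$, which can be driven below any constant, so the algorithm outputs $0$ with high probability.

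The remaining ingredient is the dependence on $k$ when $k$ is not known exactly — but here the theorem only promises ``either $\geq k$ or none,'' so it suffices to run the search tuned to $k'$ solutions for $k' = k$: the number of iterations $O(\sqrt{N/k})$ is an \emph{upper} bound on what is needed when the true number of solutions is $\geq k$, and over-iterating relative to the true (larger) count only requires the usual trick of randomizing the number of Grover iterations in $[0, O(\sqrt{N/k})]$, or equivalently running with a geometrically-decreasing sequence of guesses; either variant keeps the time complexity $O(T\log N\sqrt{N/k})$ with constant success probability. I do not expect a real obstacle here — the whole statement is a packaging of textbook amplitude amplification — so the ``hard part'' is purely expository: stating the bounded-error robustness cleanly enough that later invocations of this theorem inside the recursive algorithm (where $\mathcal A$ is itself such a Grover-based routine) compose without the error blowing up, which is ensured because there are only $O(\log n)$ levels of recursion and $\delta$ can be taken inverse-polynomially small at a cost absorbed into the $n^{O(K\log n)}$ factor.
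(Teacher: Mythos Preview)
Your proposal is correct and follows the same skeleton as the paper's proof: invoke standard Grover/amplitude amplification for the $O(\sqrt{N/k})$ iteration count, account for $O(\log N)$ gates per diffusion step, and then deal with the bounded-error oracle. The paper's proof is essentially three citations: Grover for the iteration count, the folklore $O(\log N)$ gate cost of one iteration, and the H{\o}yer--Mosca--de Wolf result \cite{HMDw03} that a bounded-error oracle can be used in Grover search \emph{without} any additional logarithmic overhead.

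The one place you diverge is the bounded-error handling. You propose amplifying $\mathcal A$ to error $\delta$ by majority vote and then invoking ``known robustness statements'' at constant $\delta$. This is slightly loose: a naive triangle-inequality accumulation over $\sqrt{N/k}$ iterations would force $\delta = O(\sqrt{k/N})$, hence $O(\log N)$ repetitions of $\mathcal A$ per oracle call --- which, happily, still lands inside the stated $O(T\log N\sqrt{N/k})$ bound, so your argument goes through. The paper sidesteps this by citing \cite{HMDw03} directly, which is the robustness statement you are gesturing at. Your treatment of the ``$\geq k$ solutions or none'' promise via randomizing the iteration count is correct and in fact more explicit than the paper, which leaves that point implicit.
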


\begin{proof}
First, it is well-known that in the case of $k$ marked elements, Grover's algorithm \cite{Grover96} needs $O(\sqrt{N/k})$ iterations.
Second, the gate complexity of one iteration of Grover's search is known to be $O(\log N)$.
Finally, even though $\mathcal A$ has constant probability of error, there is a result that implements Grover's search with a bounded-error oracle without introducing another logarithmic factor \cite{HMDw03}.
\end{proof}

Now, for a class $\mathcal C$ of $y$'s (for a fixed $x$) we need to generate a superposition $\frac{1}{\sqrt{|\mathcal C|}} \sum_{y \in \mathcal C} \ket{y}$ efficiently to apply Grover's algorithm.
We will generate a slightly different superposition for the same purposes.
Let $I_1, \ldots, I_D$ be sets $I_d \coloneqq \{i \in [n] \mid x_i = d\}$ and let $n_d \coloneqq |I_d|$.
Let $y_{\mathcal C}$ be the representative of $\mathcal C$.
We will generate the superposition
\begin{equation}
    \bigotimes_{d=0}^D \frac{1}{\sqrt{n_d!}} \sum_{\pi \in S_{n_d}} \ket{{\pi(y_{\mathcal C}}_{I_d})}\ket{\pi}, \label{eq:superposition}
\end{equation}
where ${y_{\mathcal C}}_{I_d}$ are the positions of $y_{\mathcal C}$ in $I_d$.

We need a couple of procedures to generate such state.
First, there exists a procedure to generate the uniform superposition of permutations $\frac{1}{\sqrt{n!}}\sum_{\pi \in S_n} \ket{\pi_1,\ldots,\pi_n}$ that requires $O(n^2 \log n)$ elementary gates \cite{AL97,CdLYMW19}.
Then, we can build a circuit with $O(\poly(n))$ gates that takes as an input $\pi \in S_n$, $s \in \{0,1,\ldots,D\}^n$ and returns $\pi(s)$.
Such an circuit essentially could work as follows: let $t \coloneqq 0^n$; then for each pair $i, j \in [n]$, check whether $\pi(i)=j$; if yes, let $t_j \leftarrow t_j+ s_{\pi(i)}$; in the end return $t$.
Using these two subroutines, we can generate the required superposition using $O(\poly(n))$ gates (we assume $D$ is a constant).

However, we do not necessarily know the sets $I_d$, because the positions of $x$ have been permuted by previous applications of permutations.
To mitigate this, note that we can access this permutation in its own register from the previous computation.
That is, suppose that $x$ belongs to a class $\mathcal C'$ and $x = \sigma(x_{\mathcal C '})$, where $x_{\mathcal C '}$ is the representative of $\mathcal C'$ generated by the classical algorithm from the previous subsection.
Then we have the state $\ket{\sigma(x_{\mathcal C '})}\ket{\sigma}$.

We can then apply $\sigma$ to both $\pi(y_{\mathcal C})$ and $\pi$.
That is, we implement the transformation
$$\ket{\pi(y_{\mathcal C})}\ket{\pi} \to \ket{\sigma(\pi(y_{\mathcal C}))}\ket{\sigma\pi}.$$
Such transformation can also be implemented in $O(\poly(n))$ gates.
Note that now we store the permutation $\sigma\pi$ in a separate register, which we use in a similar way recursively.

Finally, examine the number of positive solutions among $\pi(y_{\mathcal C})$.
That is, for how many $\pi$ there exists a path from $\pi(y)$ to $x$?
Suppose that there is a path from $y$ to $x$ for some $y \in \mathcal C$.
Examine the indices $I_d$; for $n_{a,d}$ of these indices $i$ we have $y_i = a$.
There are exactly $n_{a,d}!$ permutations that permute these indices and don't change $y$.
Hence, there are $\prod_{a = 0}^d n_{a,d}!$ distinct permutations $\pi \in S_{n_d}$ such that $\pi(y) = y$.

Therefore, there are $k\coloneqq\prod_{d=0}^D \prod_{a = 0}^d n_{a,d}!$ distinct permutations $\pi$ among the considered such that $\pi(y) = y$.
The total number of considered permutations is $N\coloneqq\prod_{d=0}^D n_d!$.
Among these permutations, either there are no positive solutions, or at least $k$ of the solutions are positive.
Grover's search then works in time $O(T \log N \sqrt{N/k})$.
In this case, $N/k$ is exactly the size of the class $\mathcal C$, because $\frac{n_d!}{n_{0,d}!\cdots n_{d,d}!}$ is the number of unique permutations of ${y_{\mathcal C}}_{P_d}$, the multinomial coefficient $\binom{n_d}{n_{0,d},\ldots,n_{d,d}}$.
Hence the state Eq.~(\ref{eq:superposition}) effectively replaces the need for the state $\frac{1}{\sqrt{|\mathcal C|}} \sum_{y \in \mathcal C} \ket{y}$.

\subsubsection{Total complexity}

Finally, we discuss the total time complexity of this algorithm.
The exponential time complexity of the described quantum algorithm is at most the exponential query complexity because Grover's search examines a single class $\mathcal C$, while VTS in the query algorithm examines all possible classes.
Since Grover's search has a logarithmic factor overhead, the total time complexity of the quantum part of the algorithm is what is described in Section~\ref{sec:query} multiplied by $n^{O(K \log n)}$, resulting in $n^{O(K \log n)} T_1^{n_1} \cdots T_D^{n_D}$.

Since there are $n^{O(D^2 K \log n)}$ sets of choices for the classes of the layers, the final total time complexity of the algorithm is $n^{O(D^2 K \log n)} T_1^{n_1} \cdots T_D^{n_D}$.
Therefore, we have the following result.
\begin{theorem} \label{thm:time}
Assuming QRAM model of computation, there exists a quantum algorithm that solves the path in the $n$-dimensional lattice problem and has time complexity $\poly(n)^{D^2 \log n} \cdot T_D^n$.
\end{theorem}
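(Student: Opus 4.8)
The plan is to assemble the components already sketched in the preceding informal subsections into a clean argument, following the three-tier structure: (i) a classical enumeration over ``layer classes'' for all levels of recursion, (ii) a quantum algorithm, parametrized by such an enumeration, that mirrors Algorithm \ref{alg:main} with each VTS call replaced by Grover's search over permutations of coordinates within a fixed class, and (iii) a complexity accounting that multiplies the per-choice quantum time by the number of choices. First I would fix notation: for each of the $r = O(\log n)$ recursion levels and each of the $2K+1$ layers at a level, a \emph{class} is specified by the integers $n_{a,b}$ ($0 \le a \le b \le D$) counting coordinates $i$ with $y_i = a$ relative to the ``upper'' vertex $x$ with $x_i = b$; since each $n_{a,b} \le n$ there are at most $n^{D^2}$ classes per layer, hence $n^{O(D^2 K \log n)}$ total choices across the whole recursion. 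I would then observe that the correct answer ``is there a path from $0^n$ to $D^n$'' is the disjunction, over all valid choices of classes, of ``is there a path staying within the chosen classes''; this reduces the problem to solving one constrained instance per choice, at the cost of a $\poly(n)^{O(D^2 K \log n)}$ multiplicative overhead from iterating over the choices classically.

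Next I would describe the quantum subroutine for a single choice of classes. The key point, already developed in the ``Quantum algorithm'' subsubsection, is that instead of preparing $\frac{1}{\sqrt{|\mathcal C|}}\sum_{y\in\mathcal C}\ket{y}$ directly, we prepare the uniform superposition over permutations $\bigotimes_{d=0}^D \frac{1}{\sqrt{n_d!}}\sum_{\pi\in S_{n_d}}\ket{\pi(y_{\mathcal C,I_d})}\ket{\pi}$, which is routine using the known $O(n^2\log n)$-gate permutation-superposition circuit of \cite{AL97,CdLYMW19} together with an $O(\poly(n))$-gate circuit applying a stored permutation to a coordinate string, and composing with the permutation $\sigma$ carried over from the previous recursion level. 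I would then invoke the multi-marked Grover bound: among the $N = \prod_d n_d!$ permutations, either none yield a path or at least $k = \prod_{d}\prod_{a\le d} n_{a,d}!$ do (those fixing a valid $y$ pointwise), so Grover's search runs in $O(T\log N\sqrt{N/k})$, and crucially $N/k = \prod_d \binom{n_d}{n_{0,d},\dots,n_{d,d}} = |\mathcal C|$. Thus each Grover layer contributes exactly the branching factor $\sqrt{|\mathcal C|}$ that the corresponding VTS call would contribute in the query analysis, up to a $\log N = O(n)$ factor per layer and the bounded-error-oracle handling of \cite{HMDw03}.

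The remaining step is bookkeeping. Since Grover's search for one choice of classes examines just that single class, while the VTS in the query algorithm effectively ranges over all classes at once, the recursion tree for the quantum time algorithm (on a fixed choice) is dominated termwise by the recursion analyzed in Section~\ref{sec:query}; hence its exponential part is at most $T_1^{n_1}\cdots T_D^{n_D} \le T_D^n$. Each of the $O(\log n)$ recursion levels contributes at most $K+1$ nested Grover searches, each with a $\poly(n)$ gate/log overhead, for a multiplicative $n^{O(K\log n)}$; the QRAM accesses to the precalculated $\mathrm{dp}$ tables cost only $O(\log((D+1)^n)) = O(n)$ each. Multiplying the quantum time $n^{O(K\log n)} T_D^n$ by the $n^{O(D^2 K\log n)}$ classical enumeration gives total time $n^{O(D^2 K\log n)} T_D^n = \poly(n)^{D^2\log n}\cdot T_D^n$, as claimed. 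I expect the main obstacle to be making rigorous the claim that preparing the permutation-superposition state genuinely substitutes for $\frac{1}{\sqrt{|\mathcal C|}}\sum_{y\in\mathcal C}\ket{y}$ inside Grover — one must check that the ``flatness ratio'' $N/k$ equals $|\mathcal C|$ exactly (so the iteration count is right), and that the algorithm oracle for the constrained \textsc{Path} subroutine can indeed be applied coherently to the permuted coordinate register together with the carried-over $\sigma$; everything else is standard Grover amplitude bookkeeping and $\poly(n)$ circuit constructions.
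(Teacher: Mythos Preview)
Your proposal is correct and follows essentially the same approach as the paper: the paper's proof of Theorem~\ref{thm:time} is precisely the content of Section~\ref{sec:time} (the ``Layer classes'', ``Quantum algorithm'', and ``Total complexity'' subsubsections), and you have accurately assembled those pieces---the $n^{O(D^2 K \log n)}$ classical enumeration over classes, the replacement of VTS by Grover over the permutation superposition of Eq.~\eqref{eq:superposition}, the $N/k=|\mathcal C|$ identity, and the final multiplication---into a coherent argument with the same complexity accounting.
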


\section{Applications} \label{sec:app}

\subsection{Set multicover} \label{sec:smc}

As an example application of our algorithm, we apply it to the \textsc{Set Multicover} problem (SMC).
This is a generalization of the \textsc{Minimum Set Cover} problem.
The SMC problem is formulated as follows:\vspace{2mm}
    
\textbf{Input:} A set of subsets $\mathcal S \subseteq 2^{[n]}$, and a positive integer $D$.\vspace{2mm}
    
\textbf{Output:}  The size $k$ of the smallest tuple $(S_1,\ldots,S_k) \in \mathcal S^k$, such that for all $i\in \mathcal [n]$, we have $|\{j  \mid i \in S_j\}| \geq D$, that is, each element is covered at least $D$ times (note that each set $S \in \mathcal S$ can be used more than once).\vspace{2mm}

Denote this problem by $\SMC_D$, and $m \coloneqq |\mathcal S|$.
This problem has been studied classically, and there exists an exact deterministic algorithm based on the inclusion-exclusion principle that solves this problem in time $\widetilde O(m(D+1)^n)$ and polynomial space \cite{Nederlof08, HWYL10}.
While there are various approximation algorithms for this problem, we are not aware of a more efficient classical exact algorithm.

There is a different simple classical dynamic programming algorithm for this problem with the same time complexity (although it uses exponential space), which we can speed up using our quantum algorithm.
For a vector $x \in \{0,1,\ldots,D\}^n$, define $\text{dp}(x)$ to be the size $k$ of the smallest tuple $(\mathcal C_1,\ldots,\mathcal C_k) \in \mathcal S^k$ such that for each $i$, we have $|\{j \in [k] \mid i \in \mathcal C_j\}| \geq x_i$.
It can be calculated using the recurrence
\begin{align*}
    \text{dp}(0^n) = 0, \hspace{1cm}
    \text{dp}(x) = 1 + \min_{S \in \mathcal S} \{ \text{dp}(x') \},
\end{align*}
where $x'$ is given by $x_i' = \max\{0,x_i-\chi(S)_i\}$ for all $i$.
Consequently, the answer to the problem is equal to $\text{dp}(D^n)$.
The number of distinct $x$ is $(D+1)^n$, and $\text{dp}(x)$ for a single $x$ can be calculated in time $O(nm)$, if $\text{dp}(y)$ has been calculated for all $y < x$.
Thus the time complexity is $O(nm(D+1)^n)$ and space complexity is $O((D+1)^n)$.

Note that even though the state space of the dynamic programming here is $\{0,1,\ldots,D\}^n$, the underlying transition graph is not the same as the hyperlattice examined in the quantum algorithm.
A set $S \in \mathcal S$ can connect vertices that are $|S|$ distance apart from each other, unlike distance $1$ in the hyperlattice.
We can essentially reduce this to the hyperlattice-like transition graph by breaking such transition into $|S|$ distinct transitions.

More formally, examine pairs $(x,S)$, where $x \in \{0,1,\ldots,D\}^n$, $S \in \mathcal S$.
Let $e(x,S) := \min\{i \in S \mid x_i > 0\}$; if there is no such $i$, let $e(x,S)$ be $0$.
Define a new function
\begin{align*}
\text{dp}(x,S) &=
\begin{cases}
    0, &\text{if $x=0^n$,}\\
    \text{dp}(x-\chi(\{e(x,S)\}),S), &\text{if $e(x,S) > 0$,}\\
    1+\min_{T \in \mathcal S, e(x,T) > 0} \{ \text{dp}(x-\chi(\{e(x,T)\}),T\}, &\text{if $e(x,S) = 0$.}
\end{cases}
\end{align*}
The new recursion also solves $\SMC_D$, and the answer is equal to $\min_{S \in \mathcal S}\{ \text{dp}(D^n,S)\}$.

Examine the underlying transition graph between pairs $(x,S)$.
We can see that there is a transition between two pairs $(x,S)$ and $(y,T)$ only if $y_i = x_i+1$ for exactly one $i$, and $y_i = x_i$ for other $i$.
This is the $n$-dimensional lattice graph $Q(D,n)$.
Thus we can apply our quantum algorithm with a few modifications:
\begin{itemize}
    \item We now run Grover's search over $(x,S)$ with fixed $|x|$ for all $S \in \mathcal S$.
    This adds a $\poly(m,n)$ factor to each run of Grover's search.
    \item Since we are searching for the minimum value of $\text{dp}$, we actually need a quantum algorithm for finding the minimum instead of Grover's search.
    We can use the well-known quantum minimum finding algorithm that retains the same query complexity as Grover's search \cite{DH96}\footnote{Note that this algorithm assumes queries with zero error, but we apply it to bounded-error queries. However, it consists of multiple runs of Grover's search, so we can still use the result of \cite{HMDw03} to avoid the additional logarithmic factor.}.
    It introduces only an additional $O(\log n)$ factor for the queries of minimum finding to encode the values of $\text{dp}$, since $\text{dp}(x,S)$ can be as large as $Dn$.
    \item A single query for a transition between pairs $(x,S)$ and $(y,T)$ in this case returns the value of the value added to the dp at transition, which is either $0$ or $1$.
    If these pairs are not connected in the transition graph, the query can return $\infty$.
    Note that such query can be implemented in $\poly(m,n)$ time.
\end{itemize}

Since the total number of runs of Grover's search is $O(K \log n)$, the additional factor incurred is $\poly(m,n)^{O(K \log n)}$.
This provides a quantum algorithm for this problem with total time complexity $$\poly(m,n)^{O(K \log n)}\cdot n^{O(D^2 K \log n)} T_D^n = m^{O(K \log n)} n^{O(D^2 K \log n)} T_D^n.$$
Therefore, we have the following theorem.
\begin{theorem} \label{thm:smc}
Assuming the QRAM model of computation, there exists a quantum algorithm that solves $\SMC_D$ in time $\poly(m,n)^{\log n} T_D^n$, where $T_D < D+1$.
\end{theorem}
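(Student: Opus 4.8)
The plan is to reduce $\SMC_D$ to the hyperlattice path problem of Section~\ref{sec:problem} and then invoke the quantum algorithm of Theorem~\ref{thm:time}, adjusting it for an optimization (rather than decision) objective. First I would recall the natural exponential-time dynamic program on the state space $\{0,1,\ldots,D\}^n$: for a coverage-demand vector $x$, let $\text{dp}(x)$ be the minimum number of sets needed so that each coordinate $i$ is covered at least $x_i$ times, with $\text{dp}(0^n)=0$ and $\text{dp}(x)=1+\min_{S\in\mathcal S}\text{dp}(x')$ where $x'_i=\max\{0,x_i-\chi(S)_i\}$; the answer is $\text{dp}(D^n)$, computable in time $O(nm(D+1)^n)$.

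The key obstacle is that this transition graph is \emph{not} $Q(D,n)$: choosing a set $S$ moves from $x$ to a vertex up to $|S|$ coordinates away, whereas edges of $Q(D,n)$ change exactly one coordinate by $+1$. So the main step is a gadget reduction: introduce auxiliary states $(x,S)$ and split a single ``use $S$'' move into $|S|$ unit steps that increment, one coordinate at a time, the coordinates of $S$ with positive remaining demand. Concretely I would set $e(x,S)=\min\{i\in S\mid x_i>0\}$ (or $0$ if no such $i$), and define $\text{dp}(x,S)$ to be $0$ if $x=0^n$, to be $\text{dp}(x-\chi(\{e(x,S)\}),S)$ if $e(x,S)>0$ (keep consuming the current set), and to be $1+\min_{T:\,e(x,T)>0}\text{dp}(x-\chi(\{e(x,T)\}),T)$ if $e(x,S)=0$ (the current set is exhausted, pay $1$ and commit to a new set $T$). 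I would then verify that $\min_{S\in\mathcal S}\text{dp}(D^n,S)$ equals the SMC optimum: the cost increments precisely when a fresh set is committed, and ordering the consumption of each set by the smallest still-demanded index in it does not change the multiset of sets used, so no optimality is lost. Crucially, the transition graph on pairs $(x,S)$ now connects $(x,S)$ to $(y,T)$ only when $y$ differs from $x$ in exactly one coordinate by $+1$, i.e., it is exactly $Q(D,n)$, with extra bookkeeping in the $S$-component that does not affect which weight layers a path crosses.

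Next I would run the implementation underlying Theorem~\ref{thm:time} on this graph, with three modifications, each costing only a $\poly(m,n)$-per-search overhead: (i) the Grover searches inside the algorithm now range over pairs $(x,S)$ with $|x|$ fixed, multiplying each search space by at most $m$; (ii) since we want a minimum rather than a yes/no answer, replace Grover's search by quantum minimum finding \cite{DH96}, which has the same query complexity up to an $O(\log n)$ factor for encoding $\text{dp}$-values in $\{0,\ldots,Dn\}$, and still tolerates bounded-error oracles via \cite{HMDw03}; (iii) a single query on an edge $(x,S)\to(y,T)$ returns the incremental cost ($0$ or $1$, or $\infty$ if the pair is not an edge), computable in $\poly(m,n)$ time. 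Since the algorithm makes $O(K\log n)$ nested Grover/minimum-finding calls, these overheads compound to a factor $\poly(m,n)^{O(K\log n)}$. Combining this with the $n^{O(D^2K\log n)}$ classical class-enumeration factor from Section~\ref{sec:time} and the $T_D^n$ exponential term of Theorem~\ref{thm:time}, the total time is $\poly(m,n)^{O(K\log n)}\,n^{O(D^2K\log n)}\,T_D^n=\poly(m,n)^{\log n}\,T_D^n$ for constant $D,K$, and $T_D<D+1$ by Theorem~\ref{thm:main}, which yields the claim. I expect the only genuinely nontrivial point to be the correctness of the $(x,S)$-gadget — arguing that one-coordinate-at-a-time consumption preserves optimality and that the ``pay $1$ on commit'' accounting exactly reproduces the SMC cost — while everything afterwards is a mechanical transfer of the query-to-time machinery already established, with the standard substitution of minimum finding for Grover's search.
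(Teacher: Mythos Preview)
Your proposal is correct and matches the paper's own argument essentially step for step: the same $(x,S)$ gadget with $e(x,S)=\min\{i\in S\mid x_i>0\}$ to break each set-use into unit moves on $Q(D,n)$, the same three modifications (search over pairs, replace Grover by Dürr--H{\o}yer minimum finding, edge queries returning $0/1/\infty$), and the same complexity bookkeeping. The only point the paper leaves more implicit than you do is the correctness of the gadget, which you rightly flag as the one nontrivial verification.
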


\subsection{Related problems}

We are aware of some other works that implement the dynamic programming on the $\{0,1,\ldots,D\}^n$ $n$-dimensional lattice.

Psaraftis examined the job scheduling problem \cite{Psaraftis80}, with application to aircraft landing scheduling.
The problem requires ordering $n$ groups of jobs with $D$ identical jobs in each group.
A cost transition function is given: the cost of processing a job belonging to group $j$ after processing a job belonging to group $i$ is given by $f(i,j,d_1,\ldots,d_n)$, where $d_i$ is the number of jobs left to process.
The task is to find an ordering of the $nD$ jobs that minimizes the total cost.
This is almost exactly the setting for our quantum algorithm, hence we get $\poly(n)^{\log n} T_D^n$ time quantum algorithm.
Psaraftis proposed a classical $O(n^2(D+1)^n)$ time dynamic programming algorithm.
Note that if $f(i,j,d_1,\ldots,d_n)$ are unstructured (can be arbitrary values), then there does not exist a faster classical algorithm by the lower bound of Section \ref{sec:problem}.

However, if $f(i,j,d_1,\ldots,d_n)$ are structured or can be computed efficiently by an oracle, there exist more efficient classical algorithms for these kinds of problems.
For instance, the many-visits travelling salesman problem (MV-TSP) asks for the shortest route in a weighted $n$-vertex graph that visits vertex $i$ exactly $D_i$ times.
In this case, $f(i,j,d_1,\ldots,d_n) = w(i,j)$, where $w(i,j)$ is the weight of the edge between $i$ and $j$.
The state-of-the-art classical algorithm by Kowalik et al.~solves this problem in $\widetilde O(4^n)$ time and space \cite{KLNSW20}.
Thus, our quantum algorithm does not provide an advantage.
It would be quite interesting to see if there exists a quantum speedup for this MV-TSP algorithm.

Lastly, Gromicho et al.~proposed an exact algorithm for the job-shop scheduling problem \cite{GvHST12,vHNOG17}.
In this problem, there are $n$ jobs to be processed on $D$ machines.
Each job consists of $D$ tasks, with each task to be performed on a separate machine.
The tasks for each job need to be processed in a specific order.
The time to process job $i$ on machine $j$ is given by $p_{ij}$.
Each machine can perform at most one task at any moment, but machines can perform the tasks in parallel.
The problem is to schedule the starting times for all tasks so as to minimize the last ending time of the tasks.
Gromicho et al.~give a dynamic programming algorithm that solves the problem in time $O((p_{\max})^{2n}(D+1))^n$, where $p_{\max} = \max_{i,j} \{ p_{ij} \}$.

The states of their dynamic programming are also vectors in $\{0,1,\ldots,D\}^n$: a state $x$ represents a partial completion of tasks, where $x_i$ tasks of job $i$ have already been completed.
Their dynamic programming calculates the set of task schedulings for $x$ that can be potentially extended to an optimal scheduling for all tasks.
However, it is not clear how to apply Grover's search to calculate a whole set of schedulings.
Therefore, even though the state space is the same as in our algorithm, we do not know whether it is possible to apply it in this case.

\section{Acknowledgements}
We would like to thank Krišjānis Prūsis for helpful discussions and comments.

A.G. has been supported in part by   National   Science   Center   under   grant   agreement 2019/32/T/ ST6/00158  and  2019/33/B/ST6/02011.  
M.K. has been supported by ``QuantERA ERA-NET Cofund in Quantum Technologies implemented within the European Union's Horizon 2020 Programme'' (QuantAlgo project).
R.M. was supported in part by JST PRESTO Grant Number JPMJPR1867
and JSPS KAKENHI Grant Numbers JP17K17711, JP18H04090, JP20H04138, and JP20H05966.
J.V. has been supported in part by the project ``Quantum algorithms: from complexity theory to experiment'' funded under ERDF programme 1.1.1.5.

\printbibliography

\newpage

\appendix

\section{Numerical results for \texorpdfstring{$\boldsymbol{K=1}$}{K=1}} \label{app:numerical}

\begin{multicols}{3}

\paragraph*{\texorpdfstring{$\boldsymbol{D=1}$}{D=1}}

\begin{flalign*}
T_1 &= 1.86793\\
x &= 0.464808\\
x_{1,1} &= 6.0606\\
x_{1,2} &= 0.104715\\
\alpha_{1,1} &= 0.317317 &
\end{flalign*}

\vfill

\paragraph*{\texorpdfstring{$\boldsymbol{D=2}$}{D=2}}

\begin{flalign*}
T_1 &= 1.87788\\
T_2 &= 2.76626\\
x &= 0.595073\\
x_{1,1} &= 5.74769\\
x_{1,2} &= 0.12725\\
\alpha_{1,1} &= 0.314447\\
\alpha_{1,2} &= 0.337219&
\end{flalign*}

\vfill

\paragraph*{\texorpdfstring{$\boldsymbol{D=3}$}{D=3}}

\begin{flalign*}
T_1 &= 1.89454\\ 
T_2 &= 2.77944\\ 
T_3 &= 3.68995\\ 
x &= 0.684299\\ 
x_{1,1} &= 5.41613\\ 
x_{1,2} &= 0.146775\\ 
\alpha_{1,1} &= 0.310059\\ 
\alpha_{1,2} &= 0.336865\\ 
\alpha_{1,3} &= 0.351627&
\end{flalign*}

\vfill

\paragraph*{\texorpdfstring{$\boldsymbol{D=4}$}{D=4}}

\begin{flalign*}
T_1 &= 1.91039\\ 
T_2 &= 2.80346\\ 
T_3 &= 3.7035\\ 
T_4 &= 4.63207\\ 
x &= 0.747046\\ 
x_{1,1} &= 5.11625\\ 
x_{1,2} &= 0.163892\\ 
\alpha_{1,1} &= 0.306472\\ 
\alpha_{1,2} &= 0.335557\\ 
\alpha_{1,3} &= 0.351929\\ 
\alpha_{1,4} &= 0.362866&
\end{flalign*}

\vfill

\paragraph*{\texorpdfstring{$\boldsymbol{D=5}$}{D=5}}

\begin{flalign*}
T_1 &= 1.92386\\ 
T_2 &= 2.828\\ 
T_3 &= 3.72975\\ 
T_4 &= 4.64486\\ 
T_5 &= 5.58737\\ 
x &= 0.792588\\ 
x_{1,1} &= 4.8582\\ 
x_{1,2} &= 0.178964\\ 
\alpha_{1,1} &= 0.304026\\ 
\alpha_{1,2} &= 0.334429\\ 
\alpha_{1,3} &= 0.351624\\ 
\alpha_{1,4} &= 0.36331\\ 
\alpha_{1,5} &= 0.371992&
\end{flalign*}

\vfill

\paragraph*{\texorpdfstring{$\boldsymbol{D=6}$}{D=6}}

\begin{flalign*}
T_1 &= 1.93495\\ 
T_2 &= 2.85009\\ 
T_3 &= 3.75806\\ 
T_4 &= 4.6709\\ 
T_5 &= 5.600\\ 
T_6 &= 6.55224\\ 
x &= 0.826544\\ 
x_{1,1} &= 4.63595\\ 
x_{1,2} &= 0.192435\\ 
\alpha_{1,1} &= 0.302631\\ 
\alpha_{1,2} &= 0.333786\\ 
\alpha_{1,3} &= 0.351339\\ 
\alpha_{1,4} &= 0.363364\\ 
\alpha_{1,5} &= 0.372425\\ 
\alpha_{1,6} &= 0.379599&
\end{flalign*}

\vfill

\end{multicols}

\newpage

\section{Lower bound for general \texorpdfstring{$\boldsymbol{D}$}{D}} \label{app:lba}

	\subsection{Precalculation lower bound}
	
	\begin{lemma} \label{thm:plb}
	For every fixed $\alpha \in (0,0.5)$ there is a constant $c_\alpha$, depending only on $\alpha$, such that
	$$\inf_{x > 0} \frac{P_D(x)}{x^{\alpha D}} > (D+1) c_\alpha$$
	holds for all   $D \geq 1$. Furthermore, $\alpha > 1/4$ implies  $c_\alpha > c_{0.25} \approx 0.664554\ldots$.
		\end{lemma}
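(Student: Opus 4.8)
The plan is to pass from the discrete polynomial $P_D(x)=\sum_{i=0}^{D}x^{i}$ to an explicit one‑variable function by applying, twice, the classical mean inequality $\frac{u-1}{\ln u}\le\frac{u+1}{2}$ (logarithmic mean $\le$ arithmetic mean, valid for all $u>0$ and strict for $u\ne1$), and then to take $c_\alpha$ to be the minimum of the resulting function. Concretely, set $g_\alpha(t):=\frac{e^{t}-1}{t\,e^{\alpha t}}=\frac{e^{(1-\alpha)t}-e^{-\alpha t}}{t}$ for $t\ne0$ and $g_\alpha(0):=1$. This function is continuous on $\mathbb R$, strictly positive (its numerator has the sign of $t$), and tends to $+\infty$ as $t\to\pm\infty$ since $0<\alpha<1$; hence $c_\alpha:=\inf_{t\in\mathbb R}g_\alpha(t)$ is attained and positive, and the expansion $g_\alpha(t)=1+\frac{1-2\alpha}{2}t+O(t^{2})$ shows $g_\alpha(t)<1$ for small $t<0$, so $c_\alpha<1$.

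The core estimate I would prove is that, for $x>0$ with $x\ne1$,
\[
    P_D(x)\;=\;\frac{1+x^{D}}{2}+\sum_{i=0}^{D-1}\frac{x^{i}+x^{i+1}}{2}\;\ge\;\frac{1+x^{D}}{2}+\frac{x^{D}-1}{\ln x}\;\ge\;\frac{x^{D}-1}{D\ln x}+\frac{x^{D}-1}{\ln x}\;=\;(D+1)\,\frac{x^{D}-1}{D\ln x},
\]
where the first inequality applies $\frac{u+1}{2}\ge\frac{u-1}{\ln u}$ with $u=x$ to each term (after noting $\sum_{i=0}^{D-1}x^{i}=\frac{x^{D}-1}{x-1}$) and the second applies it with $u=x^{D}$; both are strict for $x\ne1$, and the bound is tight as $x\to1$ (where $P_D(1)=D+1$). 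Dividing by $x^{\alpha D}$ and substituting $t:=D\ln x$ (so $x^{D}=e^{t}$, $x^{\alpha D}=e^{\alpha t}$) turns the right‑hand side into $(D+1)g_\alpha(t)$, so $\frac{P_D(x)}{x^{\alpha D}}>(D+1)g_\alpha(t)\ge(D+1)c_\alpha$ for $x\ne1$; and at $x=1$ we have $\frac{P_D(1)}{1}=D+1>(D+1)c_\alpha$ because $c_\alpha<1$. Finally, $\frac{P_D(x)}{x^{\alpha D}}$ is continuous on $(0,\infty)$ and diverges as $x\to0^{+}$ and as $x\to\infty$, so its infimum is attained, which promotes the pointwise strict inequality to $\inf_{x>0}\frac{P_D(x)}{x^{\alpha D}}>(D+1)c_\alpha$ for every $D\ge1$.

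For the monotonicity claim I would first locate the minimum of $g_\alpha$ on the negative axis: one checks $g_\alpha(t)>1=g_\alpha(0)$ for all $t>0$ (for instance, $h(t):=e^{(1-\alpha)t}-e^{-\alpha t}-t$ satisfies $h(0)=h'(0)=0$ and $h''(t)=(1-\alpha)^{2}e^{(1-\alpha)t}-\alpha^{2}e^{-\alpha t}>0$ on $(0,\infty)$ because $\alpha<\frac12$), so together with $c_\alpha<1$ the infimum of $g_\alpha$ is attained at some $t_\alpha^{*}<0$. For each fixed $t<0$, $\partial_\alpha g_\alpha(t)=-(e^{t}-1)e^{-\alpha t}>0$, i.e.\ $g_\alpha(t)$ is strictly increasing in $\alpha$; hence for $\alpha>\frac14$, $c_\alpha=g_\alpha(t_\alpha^{*})>g_{1/4}(t_\alpha^{*})\ge\inf_{t<0}g_{1/4}(t)=c_{1/4}$, and numerically $c_{1/4}=c_{0.25}\approx0.664554$. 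The step I expect to be most delicate is this transfer of the slice‑wise monotonicity of $g_\alpha(t)$ in $\alpha$ to the infimum $c_\alpha$: it genuinely relies on knowing that the minimizer always lies in $t<0$, and the $g_\alpha>1$ argument above is the cleanest way I see to secure that; the remaining pieces are routine single‑variable calculus.
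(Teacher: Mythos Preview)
Your proof is correct and takes a genuinely different, much more elementary route than the paper. The paper proceeds by the change of variables $x=\frac{Dr}{Dr+1}$, invokes the implicit function theorem to show that the implicitly defined $r=r_\alpha(D)$ is decreasing in $D$, and then, through a lengthy analysis of partial derivatives and several auxiliary polynomial inequalities, proves that $F_\alpha(D)/(D+1)$ is itself monotone decreasing in $D$, so that its limit $c_\alpha$ as $D\to\infty$ is a valid lower bound for all $D\ge1$. You instead bound $P_D(x)$ from below by $(D+1)\frac{x^{D}-1}{D\ln x}$ via two applications of the logarithmic--arithmetic mean inequality, which after the substitution $t=D\ln x$ gives $\frac{P_D(x)}{x^{\alpha D}}>(D+1)g_\alpha(t)$ with $g_\alpha(t)=\frac{e^{t}-1}{t\,e^{\alpha t}}$, and set $c_\alpha=\inf_{t}g_\alpha(t)$. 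The two constants actually coincide: with the paper's $r^\infty$ satisfying $\frac{1}{r^\infty-\alpha}=e^{1/r^\infty}-1$, putting $t=-1/r^\infty$ gives precisely the minimizer of your $g_\alpha$, and the paper's formula $c_\alpha=r^\infty e^{(\alpha-1)/r^\infty}(e^{1/r^\infty}-1)$ equals your $g_\alpha(-1/r^\infty)$; so the numerical value $c_{0.25}\approx 0.664554$ agrees. Your argument is considerably shorter and avoids the implicit-function and monotonicity-in-$D$ machinery entirely; what the paper's approach additionally buys is the identification of $c_\alpha$ as the \emph{exact} limit $\lim_{D\to\infty}F_\alpha(D)/(D+1)$, i.e.\ sharpness of the constant, which your inequality gives only one direction of.
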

	\begin{proof}
		Recall that $ P_D(x)  = 1+x+\ldots + x^D $.
		By \cite[Theorem 1]{BM04},  for any $ \alpha \in (0,1) $ the infimum in question is attained at the unique positive solution of  
		$  \frac{x P'_D(x)}{P_D(x)} = D \alpha  $.
	The uniqueness of the solution implies that there is a mapping $ x_D : (0,1) \to  (0, +\infty) $,  where $ x_D(\alpha) $ is the  unique positive solution of 	$  \frac{x P'_D(x)}{P_D(x)} = D \alpha $. 
		
		Take into account that (for $ x\neq 1 $) $ P_D(x) = \frac{x^{D+1}-1}{x-1} $ and
		\begin{equation}\label{eq:l2e00}
			\frac{x P'_D(x)}{P_D(x)}
			=
			D + \frac{1}{1-x} + \frac{D+1}{x^{D+1} -1}.
		\end{equation}
		Define the value of the RHS of \eqref{eq:l2e00}  to be $ D/2 $ when $ x=1 $, then for each fixed $ D>0 $  the resulting function  is  continuous  and strictly increasing  on $ (0, +\infty) $, approaching values $ 0$ and $ D $ when $ x \to +0 $ and $ x \to +\infty $, respectively. Therefore the equation
		\[ 
		D + \frac{1}{1-x} + \frac{D+1}{x^{D+1} -1}
		= D \alpha 
		\]
		admits  a unique solution $ x_D(\alpha ) $ for every  fixed $ D>0 $ (not merely just integer $ D $)  and all $ \alpha \in (0, 1) $. Furthermore, the monotonicity of \eqref{eq:l2e00} implies that  $ x_D(0.5) = 1 $ and $ x_D (\alpha) < 1 $ when $ \alpha < 0.5 $.
		
		By $ F_\alpha(D) $ we denote $ \inf_{x > 0} \frac{P_D(x)}{x^{D \alpha}}  $;  we will show that
		\[ 
		\lim_{D \to +\infty} \frac{F_\alpha(D) }{D+1} = c_\alpha
		\]
		for some positive constant $ c_\alpha $. We will demonstrate this for $ \alpha < 0.5 $ (it is easy to see that $ F_\alpha(D) = F_{1-\alpha}(D) $); the proof consists of the following steps:
		\begin{itemize}
			\item We apply the substitution   $ x_D(\alpha) = \frac{D r}{D r +1} $ and express   $ F_\alpha(D)$ through $ r $, where $ r $ is the unique positive solution of   $ f(r,D)=0 $ for some $ f $.
			\item We demonstrate that $ r \to r^\infty $ for some finite $ r^\infty \geq 0 $ when $ D\to +\infty $.  This allows to find the limit $ \lim_{D \to +\infty} \frac{F_\alpha(D) }{D+1} = c_\alpha$, where $ c_\alpha  $ depends on $ r^\infty $ (which is completely determined by $ \alpha $). Since $ c_\alpha > 0 $, this establishes $ F_\alpha(D) = \Omega(D+1) $.
			\item To show that the limit $ r \to r^\infty $ exists, we shall show that $ r $ is decreasing in $ D $ by invoking the implicit function theorem. To that end, we show
			that  the partial derivatives of $ f(r,D) $ are positive, where $  f(r,D) =0 $ is the equation implicitly defining $ r $.
			
			\item We also show that $F_\alpha(D)/(D+1)$ is decreasing in $D$ on $[1,+\infty)$, therefore  $(D+1)c_\alpha$ is valid lower bound on $F_\alpha(D)$ for  all $D \geq 1$.   In order to achieve that, we will need to provide both lower and upper bounds on  $ x_D(\alpha) $, see \eqref{eq:l2_xBounds}.
		\end{itemize}

		Throughout the rest of this proof, we assume $ \alpha \in (0,  0.5 )$ to be fixed.
		\paragraph{Change of variables.}	 
		Let $ r =r_\alpha(D)> 0 $ satisfy $ x_D(\alpha) = \frac{D r}{D r +1}  $, i.e.,
		\[
		r = \frac{x_D(\alpha)}{D(1-x_D(\alpha))}.
		\]	  
		Substituting $ x_D(\alpha) = \frac{D r}{D r +1}  $ in  \eqref{eq:l2e00} and $ F_\alpha(D) $ yields
		\begin{equation}\label{eq:l2e01}
			F_\alpha(D)
			=
			\frac{1- (D r)^{D+1}  (D r +1)^{-D-1}   }{ 1-\frac{D r}{D r+ 1} }
			\cdot
			\frac{(D r+1)^{D \alpha}}{(D r)^{D \alpha }},
		\end{equation}
		where $ r $ is the unique\footnote{As witnessed by the fact that $x_D(\alpha)$ is unique and the mapping $ r\mapsto \frac{D r}{D r + 1} $ is bijective.} positive solution of 
		\begin{equation}\label{eq:l2e02}
			D  + (Dr +1)+ \frac{D+1}{ (D r)^{D+1}  (D r +1)^{-D-1}  -1 }=  D\alpha .
		\end{equation}
		
		We can transform \eqref{eq:l2e02} to 
		\[
		\frac{1}{ (D r)^{D+1}  (D r +1)^{-D-1}  -1 } +1
		=
		-\frac{D }{D+1} (r-\alpha) 
		\]	
		or, since $  1/ \left(x^{D+1} -1\right)  +1 =  1 / \left(1 -  x^{-D-1} \right)$,
		\[
		1 -  (D r)^{-D-1}  (D r +1)^{D+1}  = - \frac{D+1}{D (r-\alpha)} .
		\]
		The obtained equality can be rewritten as
		\begin{equation}\label{eq:l2e03}
			\frac{1}{r - \alpha}  = \frac{D}{D+1} \left(    \left(1 +  \frac{1}{D r}\right)^{D+1} -1    \right).
		\end{equation}
		Furthermore, the RHS of \eqref{eq:l2e01} can be simplified to 
		\begin{equation}\label{eq:l2e04}
			\frac{  \frac{D+1}{D+1 + D r - D\alpha }  }{1- \frac{D r}{D r+ 1} }
			\cdot
			\frac{(D r+1)^{D \alpha}}{(D r)^{D \alpha }}
			=
			(D+1) \,  \frac{Dr+1}{D+1 +D(r-\alpha)} \left( 1 + \frac{1}{D r} \right)^{D \alpha }.
		\end{equation}

		\paragraph{The large $ D $ limit.}	 
		We will show that the limit  $ \lim_{D \to +\infty} r_\alpha(D)   =: r^\infty \geq 0$ exists and is finite; taking limit of both sides of \eqref{eq:l2e03},  obtain
		\begin{equation}\label{eq:l2e03b}
			\frac{1}{r^\infty  - \alpha} = \mathrm e^{1 / r^\infty} - 1.
		\end{equation}
		Similarly we obtain from \eqref{eq:l2e04}
		\begin{equation}\label{eq:l2e04b}
		    	\lim_{D\to +\infty} \frac{F_\alpha(D)}{D+1} = \frac{r^\infty  \mathrm e^{ \alpha  / r^\infty}}{r^\infty +1- \alpha } 
		=
		r^\infty \mathrm e^{(\alpha-1)/r^\infty} \left(   \mathrm e^{1/r^\infty}-1 \right),
		\end{equation}
		i.e., 
		\[ 
		F_\alpha(D)  \sim    (D+1) c_\alpha,
		\]
		where   $ r^\infty $ is the unique positive solution  of \eqref{eq:l2e03b} and  $ c_\alpha :=   r^\infty \mathrm e^{(\alpha-1)/r^\infty} \left(   \mathrm e^{1/r^\infty}-1 \right)>0$ is independent of $ D $. The claim $ F_\alpha(D) = \Omega(D+1)  $ for $ \alpha = \Omega(1) $  follows. 

		In fact, it can be seen that $ c_\alpha $ is an increasing function in $ \alpha $ on $ (0,0.5) $. Notice that \eqref{eq:l2e03b} allows to express
		\begin{equation}\label{eq:l2e03c}
			\alpha = \frac{1}{1-\e^{r^{-\infty }}}+r^{\infty } ,
		\end{equation}
		and it is easy to verify that the RHS  is strictly increasing   in $ r^\infty $, approaching  values 0 and $ \frac{1}{2} $ as $ r^\infty $ approaches 0 and $+ \infty $. The inverse function theorem then implies that \eqref{eq:l2e03c} defines a differentiable, increasing function $ r^\infty(\alpha) $, defined on $ (0,0.5) $. Now let $ h(r,\alpha)$ stand for the RHS of \eqref{eq:l2e04b}, i.e.,  $ h(r,\alpha) =  r \mathrm e^{(\alpha-1)/r} \left(   \mathrm e^{1/r}-1 \right)$. Then we wish  to show  that $ h(r^\infty(\alpha), \alpha) $ is increasing in $ \alpha $, i.e., the total derivative $  \frac{\partial h}{\partial \alpha} (r,\alpha)  + \frac{\partial h}{\partial r} (r,\alpha)  \cdot \left( r^\infty \right)  ' (\alpha)  $ is positive when $ r = r^\infty(\alpha) $.
		
		However, the partial derivative $ \frac{\partial h}{\partial r} (r,\alpha) $ is zero when $ r = r^\infty(\alpha) $, since
		\[
		 \frac{\partial h}{\partial r} (r,\alpha)  = \frac{1}{r}\e^{\frac{\alpha -1}{r}} \left(\left(\e^{\frac{1}{r}}-1\right) (r-\alpha )-1\right).
		\]
		Moreover, for all $ r>0$ we  have
		\[
		\frac{\partial h}{\partial r} (r,\alpha)  =\left(\e^{\frac{1}{r}}-1\right) \e^{\frac{\alpha -1}{r}}>0,
		\]
		which allows to conclude  that $ c_\alpha =  h(r^\infty(\alpha), \alpha) $  is indeed increasing in $ \alpha $. Also, it can be calculated that $r^\infty(0.25) \approx 0.278279$ and $c_{0.25} \approx 0.664554$.

		\paragraph{The implicit function.}	 
		We proceed to verify that $ r_\alpha(D) $ converges as $ D \to +\infty $. To that end,  notice that \eqref{eq:l2e03} admits a unique positive solution for every $ D>0 $  (even though we were interested only in integer $ D $ before). Consequently, \eqref{eq:l2e03} defines an implicit function $ D \mapsto r_\alpha(D) $. We argue that $ r_\alpha(D) $ is decreasing in $ D $; since $ r_\alpha(D) > 0 $ is also lower-bounded by zero, this will establish that a nonnegative limit  $ r^\infty = \lim_{D \to +\infty} r_\alpha(D)   \in  (0,+\infty)$ exists (in fact, a positive limit, since it can be trivially shown  that $ r^\infty > \alpha $).
		
		Denote 
		\[
		f(r,D) := 
		\frac{D}{D+1} \left(    \left(1 +  \frac{1}{D r}\right)^{D+1} -1    \right) - \frac{1}{r - \alpha},
		\]
		then $ r= r_\alpha(D) $ is the positive solution of  $ f(r , D) = 0 $. We claim that  for every $ D_0>0 $ we have
		\[
		\left. \frac{\partial }{\partial r} f(r,D) \right \vert_{r = r_\alpha(D_0), D=D_0} > 0
		\quad\text{and}\quad
		\left. \frac{\partial }{\partial D} f(r,D) \right \vert_{r = r_\alpha(D_0), D=D_0} > 0.
		\]
		Then, by the implicit function theorem, in a neighborhood of $ D_0 $ a differentiable function  $ D \mapsto r_\alpha(D) $ satisfying $ f(r_\alpha(D),D)  =0$   exists and is  decreasing, since its derivative on that neighborhood is given by
		\begin{equation}\label{eq:l2decreasingR}
			r'(\alpha,D)
			=
			- \left.  \left( \frac{\partial }{\partial r} f(r,D)   \right)^{-1} \cdot    \frac{\partial }{\partial D} f(r,D) \right\vert_{r = r_\alpha(D)} < 0.
		\end{equation}
		The inequality $ r_\alpha(D+1)  < r_\alpha(D)$ then follows for positive $ D $, since the the argument can be repeated for every $ D_0 \in [D, D+1] $ and $r_\alpha(D_0)$ is unique.

		\paragraph{The partial derivatives of the implicit function.}
		It remains to consider the partial derivatives of $ f $. One finds that
		\begin{equation}\label{eq:l2der_f_D}
			\frac{\partial }{\partial D}f(r,D)
			=
			\frac{
				\left(\frac{1}{D r}+1\right)^D 
				\left((D+1) (D r+1) \ln \left(\frac{1}{D r}+1\right)-D-2   +r \right)
				-r
			}{(D+1)^2 r}.
		\end{equation}
		To establish
		$ \frac{\partial }{\partial D}f(r,D)>0 $,
		consider the mixed derivative
		\[
		\frac{\partial^2 }{\partial r \partial D}f(r,D)
		=
		\frac{\left(\frac{1}{D r}+1\right)^D \left(1-(D r+1) \ln \left(\frac{1}{D r}+1\right)\right)}{r^2 (D r+1)}<0,
		\]
		where the inequality can be obtained by setting $ z = 1/ (D r) $  in  $ \ln(1+z) \left(1 + 1/z\right)  > 1 $, which holds for all $ z>0 $.  Consequently, $ \frac{\partial }{\partial D}f(r,D) $ is decreasing in $ r $; since it is straightforward to check that $ \lim_{r\to +\infty} \frac{\partial }{\partial D}f(r,D) = 0 $, we conclude   $ \frac{\partial }{\partial D}f(r,D) > 0 $ for all $ r>0, D>0 $.

		Now consider
		\begin{equation}\label{eq:l2der_f_r}
			\frac{\partial }{\partial r}f(r,D)
			=
			\frac{1}{(r-\alpha )^2}-\frac{\left(\frac{1}{D r}+1\right)^D}{r^2}.
		\end{equation}
		We need to show that   $ \frac{\partial }{\partial r}f(r,D) > 0 $ when $ r  $ satisfies $ f(r,D)=0 $, i.e.,
		\[ 
		\frac{1}{r - \alpha}
		=	\frac{D}{D+1} \left(    \left(1 +  \frac{1}{D r}\right)^{D+1} -1    \right) .
		\]
		Notice that such $ r $ must satisfy $ r > \alpha $, since the RHS of the above equality is positive.
		
		Fix any $ r_0>0, D_0>0 $ such that $ f(r_0, D_0) = 0 $.
		Since  $ \frac{\partial f}{\partial r}(r_0,D) $ is decreasing in $ D $ (the mixed derivative is negative, as concluded previously) and
		\begin{equation}\label{eq:l2e05}
			\lim_{D \to +\infty} \frac{\partial f}{\partial r}(r_0,D)
			=
			\frac{1}{\left(r_0 - \alpha \right)^2}-\frac{\mathrm e^{\frac{1}{r_0}}}{r_0^2},
		\end{equation}
		there are two possibilities:
		\begin{enumerate}
			\item the RHS of \eqref{eq:l2e05} is nonnegative, then $  \frac{\partial f}{\partial r}(r_0,D) $ is positive for all $ D >0 $ including $ D=D_0 $ and we are done;
			\item $ r_0 >0$ is such that the RHS of \eqref{eq:l2e05} is negative, then there exists a unique $ D_1 > 0 $ satisfying $  \frac{\partial f}{\partial r} (r_0,D_1)  =0$.
		\end{enumerate}
		Consider the second possibility; to demonstrate   $ \frac{\partial f}{\partial r}(r_0,D_0)>0 $, we need to show that $ {D_1 > D_0} $, since   $ \frac{\partial f}{\partial r}(r_0,D) $ is decreasing in $ D $. This is equivalent to $ f(r_0, D_1) >f(r_0, D_0)=0$, since $ f(r_0,D) $ is strictly increasing in $ D $ as shown previously. 
		
		The equality $   \frac{\partial f}{\partial r} (r_0,D_1)  =0$ gives
		\[
		\frac{r_0^2}{(r_0-\alpha )^2}
		=
		\left(1+\frac{1}{D_1 r_0}\right)^{D_1},
		\]
		which allows to rewrite  $ f(r_0,D_1)  $ as 
		\[
		f(r_0,D_1) 
		=
		\frac{r_0 \left(2 \alpha  D_1+D_1+2\right)-\alpha  \left(\alpha  D_1+D_1+1\right)}{\left(D_1+1\right) \left(r_0 - \alpha \right)^2}
		\]
		Now recall that $ r_0 > \alpha $, therefore the numerator of the above expression can be lower-bounded as 
		\begin{align*}
			& r_0 
			\left(2 \alpha  D_1+D_1+2\right)
			-\alpha  
			\left(\alpha  D_1+D_1+1\right)
			\\
			>
			& 
			\alpha 
			\left(
			\left(2 \alpha  D_1+D_1+2\right)
			-
			\left(\alpha  D_1+D_1+1\right)
			\right)
			=
			\alpha (\alpha D_1 + 1)	 > 0.
		\end{align*}
		Consequently, $ f(r_0,D_1) >0 $, implying that $ D_1 > D_0 $ and $   \frac{\partial f}{\partial r}(r_0,D_0) > 0$. This establishes \eqref{eq:l2decreasingR} and the existence of $ r^\infty $.
		
		\paragraph{Monotonicity of the lower bound in \texorpdfstring{$\boldsymbol{D}$}{D}.}
		It remains to show that $ \frac{F_\alpha(D)}{D+1}  $ is decreasing in $ D $, therefore $ (D+1)c_\alpha  $ is a valid lower bound on $ F_\alpha(D) $ for all $ D $. From \eqref{eq:l2e04} it follows that $ g(r_\alpha(D),D) $ must be decreasing, where
		\[
		g(r,D):=
		 \frac{Dr+1}{D+1 +D(r-\alpha)} \left( 1 + \frac{1}{D r} \right)^{D \alpha }.
		\]
		Since $r _\alpha$ is  differentiable, it suffices to show that the total derivative of $ g(r_\alpha(D),D) $ is non-positive, i.e.,
		\[
		\frac{\partial }{\partial D} g(r,D) + \frac{\partial }{\partial r} g(r,D) \cdot r'_\alpha(D) \leq  0.
		\]
		Taking into account \eqref{eq:l2decreasingR} and the now-proven fact that the partial derivatives of $f$ are positive (at the values of $ (r,D) $ we are interested in),  we can equivalently transform the above inequality as
		\begin{equation}\label{eq:l2e06}
			\frac{\partial }{\partial D} g(r,D) \cdot   \frac{\partial }{\partial r} f(r,D) - \frac{\partial }{\partial r} g(r,D) \cdot   \frac{\partial }{\partial D}f (r,D) \leq   0,
		\end{equation}
		which must be satisfied with  $ r = r_\alpha(D) $ for all $ D \geq 1 $.
		
		The derivatives $ \frac{\partial }{\partial r} f(r,D)   $  and $ \frac{\partial }{\partial D} f(r,D)  $ are given by \eqref{eq:l2der_f_r} and \eqref{eq:l2der_f_D}, respectively; however, the equality \eqref{eq:l2e03} allows to express them (when $ r= r_\alpha(D) $) as
		\begin{align*}
			& 	\frac{\partial f}{\partial D}(r,D)
			=
			\frac{(D (-\alpha +r+1)+1) \left((D+1) (D r+1) \ln \left(\frac{1}{D r}+1\right)-D+r-2\right)}{(D+1)^2 (D r+1) (r-\alpha )}-\frac{1}{(D+1)^2}, 
			\\
			& 	\frac{\partial f}{\partial r}(r,D)
			=
			\frac{\alpha +D \left(-\alpha ^2+\alpha +2 \alpha  r-r\right)}{r (D r+1) (r-\alpha )^2},
		\end{align*}
	valid when $ D>0 $, $ r= r_\alpha(D) $.
	
	While the expressions of  $ \frac{\partial g}{\partial D}  $, $ \frac{\partial g}{\partial r}  $ are somewhat cumbersome,
	\begin{align*}
		&  \frac{\partial g}{\partial D} (r,D)=
		\frac{\left(\frac{1}{D r}+1\right)^{\alpha  D} \left(\alpha  D (\alpha -r-1)+\alpha  (D r+1) (D (-\alpha +r+1)+1) \ln \left(\frac{1}{D r}+1\right)-1\right)}{(D (-\alpha +r+1)+1)^2},
		\\
		&  \frac{\partial g}{\partial r} (r,D) =
		-\frac{D \left(\frac{1}{D r}+1\right)^{\alpha  D} \left(\alpha +D \left(-\alpha ^2+\alpha +2 \alpha  r-r\right)\right)}{r (D (-\alpha +r+1)+1)^2} ,
	\end{align*}
now  \eqref{eq:l2e06} 	can be rewritten in a rather simple form:
\[
\frac{\left(\frac{1}{D r}+1\right)^{\alpha  D} \left(\alpha +D \left(-\alpha ^2+\alpha +2 \alpha  r-r\right)\right) \left((\alpha +D r) \ln  \left(\frac{1}{D r}+1\right)-1\right)}{(D+1) r (r-\alpha )^2 (D (-\alpha +r+1)+1)}
\leq 
0.
\]
Since the denominator and $ \left(\frac{1}{D r}+1\right)^{\alpha  D} $ are obviously positive, it suffices to demonstrate that 
\begin{align*}
	&  \alpha +D \left(-\alpha ^2+\alpha +2 \alpha  r-r\right) \geq 0, 
	\\
	& (\alpha +D r) \ln  \left(\frac{1}{D r}+1\right)-1 \leq 0.
\end{align*}
At this point it is more advantageous to return to the variable $ x = \frac{D r }{D r + 1} $; in $ x $ both inequalities become (after multiplying by $ 1-x>0 $)
\begin{align}
	& \alpha +D \alpha (1-\alpha ) - (1-\alpha ) x (\alpha  D+1) \geq 0, \label{eq:l2e07a}  \\
	& -(\alpha + x(1-\alpha  )) \ln x - (1-x)  \leq 0.\label{eq:l2e07b}
\end{align}
In the following paragraph we will show that $ x = x_D(\alpha) $   satisfies bounds
\begin{equation}\label{eq:l2_xBounds}
	\frac{\alpha }{1-\alpha } \leq x_D(\alpha)  \leq  \frac{\alpha }{1-\alpha } \cdot    \frac{ 1+ (1-\alpha ) D  }{ 1+ \alpha  D } ,
	\quad  D \geq 1.
\end{equation}
Notice that \eqref{eq:l2e07a} is equivalent to the upper bound on $ x_D(\alpha) $ in \eqref{eq:l2_xBounds}; let us show that \eqref{eq:l2e07b} holds. To that end, differentiate the LHS of \eqref{eq:l2e07b} twice w.r.t. $ x $ to see that the second derivative $  - \frac{(1-\alpha ) x-\alpha }{x^2} $ is non-positive due to the lower bound on $ x_D(\alpha) $ in \eqref{eq:l2_xBounds}. Moreover, both the LHS of \eqref{eq:l2e07b}  and its derivative equal zero at $ x=1 $. We conclude that \eqref{eq:l2e07b} is satisfied by all $ \frac{\alpha }{1-\alpha }  \leq x \leq  1  $, therefore  also by $ x = x_D(\alpha) $. This completes the argument that the total derivative of $ g(r_\alpha(D),D) $ is non-positive.

\paragraph{Lower and upper bound on \texorpdfstring{$\boldsymbol{x_D}$}{x_D}.}
The final part is to show that for all $D \geq 1$ and $  \alpha \in (0,0.5) $ the value of $ x_D(\alpha)  $ satisfies  \eqref{eq:l2_xBounds}. Instead of trying to prove directly bounds on the implicitly defined $ x_D(\alpha)  $, we can exploit the monotonicity of  \eqref{eq:l2e00} (whose RHS is equal to $ D \alpha  $) and argue that the following inequalities hold for all $ x \in (0,1)  $ and $ D \geq 1 $:
\[
x - \frac{\alpha(x) }{1-\alpha(x) } \geq 0 ,   \quad   \frac{\alpha(x) }{1-\alpha (x)} \cdot    \frac{ 1+ (1-\alpha(x) ) D  }{ 1+ \alpha(x)  D }  -x \geq 0,
\]
where 
\[
 \alpha(x) := 
 1 + \frac{1}{D(1-x)} + \frac{D+1}{D(x^{D+1} -1)}, \quad  x\in (0,1).
\]
After simplification, both inequalities in 	question are equivalent to
\begin{align*}
	& 
	 \frac{x \left( 
%	 	(D-1) \left(  1 -x^{D+1}  \right) 
%	 	- 
%	 	(D+1)\left( x-x^D\right)
%	 	
	 	(D-1)  x^{D+1}
	 	-(D+1) x^D
	 	+(D+1) x-(D-1)
	 	 \right)}{x^{D+1}-(D+1) x+D}
	\leq 0, 
	\\
	& 
	 \frac{
	 	(1-x) x 
	 	\left(
	 	x^{2 D+2}
	 	-(D+1)^2 x^{D+2}
	 	+2 D (D+2) x^{D+1}
	 	-(D+1)^2 x^D
	 	+1
	 	\right)
 	}{
 	\left( 
 	(D+1) x^{D+2}-(D+2) x^{D+1}+1
 	\right) 
 	\left(
 	x^{D+1}-(D+1) x+D
 	\right)
 }
	 	\geq 0
\end{align*}
or
\begin{equation}\label{eq:l2e08}
	\frac{x p_4(x)}{p_1(x)} \geq  0
	\quad\text{and}\quad
	\frac{x(1-x) p_3(x)}{p_1(x) p_2(x)} \geq 0,
\end{equation}
where we denote
\begin{align*}
	& p_1(x) := x^{D+1}-(D+1) x+D   \\
	& p_2(x):= 	(D+1) x^{D+2}-(D+2) x^{D+1}+1  \\
	&p_3(x):= x^{2 D+2} 	-(D+1)^2 x^{D+2} 	+2 D (D+2) x^{D+1} 	-(D+1)^2 x^D +1   \\
	&p_4(x):= -(D-1)  x^{D+1} +(D+1) x^D -(D+1) x+(D-1) .
\end{align*}
Since $ x \geq 0 $, $ 1-x \geq 0 $, it suffices to show that $ p_j(x) >0  $, $ j\in \{1,2,3,4\} $, for all $D \geq 1  $ and $ x \in (0,1)$ (with the exception $ p_4(x) \equiv 0 $ when $ D=1 $). This can be done by observing that all $ p_j $   satisfy $ p_j(1)  =0$ and proving that $ p_j $ are decreasing on $ (0,1) $ (or non-increasing, in the case of $ p_4 $ and $ D=1 $). 

To show the monotonicity of $ p_j $, consider their derivatives $ p_j' $. We easily obtain $ p_1' (x) = (D+1) \left(x^D-1\right) < 0$ and $ p_2'(x) = (D+1) (D+2) (x-1) x^D  <0 $ on $ (0,1) $. It remains to show $ p_3'(x) < 0 $ and $ p_4'(x) < 0 $. Since 
\[
 p_3'(x) =  -(D+1) x^{D-1} \left(
-2 x^{D+2}+ x^2 \left(D^2+3 D+2\right)-2 D (D+2) x+D (D+1)
 \right) ,
\]
we need to show that $ q(x) := -2 x^{D+2}+ x^2 \left(D^2+3 D+2\right)-2 D (D+2) x+D (D+1)  $ is positive on $ (0,1) $. However, $ q(1) = 0 $ and 
\[
q'(x) =-2(D+2) p_1(x)  < 0,\  x \in (0,1),
\]
therefore $ q $ is decreasing and positive on $ (0,1 ) $, and so is $ p_3 $.

Finally, consider
\[
p_4'(x) = -(D+1) (1-x) \left(
\frac{1-x^D}{1-x}-D x^{D-1}
\right).
\]
We see that $ p_4 $  is non-increasing on $ (0,1) $ iff 
\begin{equation}\label{eq:l2e09}
	\frac{1-x^D}{1-x} 
	\geq 
	D x^{D-1}, \quad  x\in (0,1), \,  D \geq 1.
\end{equation}
However, $ \frac{1-x^D}{1-x}  $ is the divided difference of the function $ \phi(t) = t^{D} $ on the nodes $ t_0 =1 $, $ t_1=x $, whereas $ D x^{D-1} $ is the derivative of $ \phi $ at $ x $.
By the mean value theorem for divided differences, the value of the divided difference $ \frac{1-x^D}{1-x}  $ is equal to  to $ \phi'(t_0) $ for some $ t_0  $ strictly between $ x $ and 1, and \eqref{eq:l2e09} is equivalent to $ \phi'(t_0)  \geq  \phi(x) $. However, $ \phi'(t) = D  t^{D-1} $ is non-decreasing (even increasing when $ D>1 $), therefore $ t_0 > x $ implies the desired $  \phi'(t_0)  \geq  \phi(x)   $  (and the inequality is strict for $ D>1 $). This establishes   \eqref{eq:l2_xBounds} and completes the proof of the lemma.
\end{proof}

\subsection{Quantum search lower bound}

\begin{lemma} \label{thm:slb}
If $\alpha_{1,D} \leq \frac{1}{4}$, then the query complexity of the quantum search part of the algorithm is $\widetilde\Omega\left(\left(\frac{D+1}{\e}\right)^n\right)$.
\end{lemma}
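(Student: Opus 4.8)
The plan is to prove $T_D\ge(D+1)/\e$ by induction on $D$ (the base $D=0$ being $T_0=1\ge1/\e$), where $T_1,\ldots,T_D$ are the exponential bases appearing in the generating‑polynomial analysis of Section~\ref{sec:query}. \emph{Reduction.} Apply inequality~(\ref{eq:src}) with $k=1$ to the full lattice $Q(D,n)$, where $n_D=n$ and $n_d=0$ for $d<D$, so that $W_1=\lfloor\alpha_{1,D}Dn\rfloor\le Dn/4$, $W_{K+1}=\lfloor Dn/2\rfloor$, and the relevant polynomial is $S_{1,D}(x_1,\ldots,x_{K+1})=\sum_{0\le p_1\le p_2\le\cdots\le p_{K+1}\le D}T_{p_2-p_1}^2\prod_{j=1}^{K+1}x_j^{p_j}$. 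This gives that the squared query complexity of the quantum search (steps~\ref{itm:sc2}--\ref{itm:sc3}) is at least $[x_1^{W_1}\cdots x_{K+1}^{W_{K+1}}]\,S_{1,D}^{\,n}$. By the saddle‑point lower bound, Theorem~\ref{thm:asy}(\ref{itm:low}) (rounding of the $W_j$ costing only a constant factor, exactly as discussed after the optimization program in Section~\ref{sec:query}), it suffices to show
\[
\inf_{x_1,\ldots,x_{K+1}>0}\ \frac{S_{1,D}(x_1,\ldots,x_{K+1})}{x_1^{\alpha_{1,D}D}\cdots x_{K+1}^{\alpha_{K+1,D}D}}\ \ge\ \left(\frac{D+1}{\e}\right)^{2},
\]
recalling $\alpha_{K+1,D}=1/2$.

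\emph{Inductive substitution and change of variables.} For $i<D$ replace $T_i^2$ by the inductive bound $(i+1)^2/\e^2$, and for the single monomial with $i=p_2-p_1=D$ (which forces $p_1=0$, $p_2=\cdots=p_{K+1}=D$) use monotonicity of the complexity in the coordinate range, $T_D^2\ge T_{D-1}^2\ge D^2/\e^2$; this bounds $S_{1,D}$ below by a fixed polynomial $\bar S$ with non‑negative coefficients of order $(\,\cdot+1)^2/\e^2$. Now substitute $y_\ell:=x_\ell x_{\ell+1}\cdots x_{K+1}$ for $\ell=1,\ldots,K+1$, a bijection of $\mathbb R_{>0}^{K+1}$ onto itself: then $\bar S$ becomes $\sum_{e_0+e_1+\cdots+e_{K+1}=D}w(e_2)\,y_1^{e_1}\cdots y_{K+1}^{e_{K+1}}$, a weighted complete‑homogeneous polynomial (the $e_0$‑slot carrying the fixed value $1$), and the denominator becomes $\prod_{j=1}^{K+1}y_j^{\gamma_jD}$ with $\gamma_1=\alpha_{1,D}\le 1/4$, $\gamma_j=\alpha_{j,D}-\alpha_{j-1,D}>0$ for $j\ge2$, and $\sum_{j=1}^{K+1}\gamma_j=\alpha_{K+1,D}=1/2$. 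Assigning the leftover weight $1/2$ to the $e_0$‑slot and using homogeneity of degree $D$, the infimum equals $\inf_{\vec z\in\mathbb R_{>0}^{K+2}}\bigl(\sum_{|\vec e|=D}w(e_2)\prod_i z_i^{e_i}\bigr)\big/\prod_i z_i^{\mu_iD}$, where $\mu_i\ge0$, $\sum_i\mu_i=1$, the largest $\mu_i$ equals $1/2$, and the $\mu_i$ on the first variable is $\le 1/4$. It remains to bound this infimum below by $(D+1)^2/\e^2$, uniformly over $D\ge1$, $K\ge1$, and all admissible weight vectors.

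\emph{The analytic core (the hard part).} By \cite[Theorem~1]{BM04} the infimum is attained at the unique positive critical point, so only its value there has to be estimated. Following the strategy of the proof of Lemma~\ref{thm:plb}, I would: (i) reduce to the worst‑case weight vector, which I expect to be $\gamma_1=1/4$ with the remaining mass $1/4$ concentrated, collapsing the problem to few variables; (ii) change variables so that a single parameter $r$ governs the $D\to\infty$ behaviour, where the coefficients $(\,\cdot+1)^2$ together with the constraint $\sum_j\gamma_j=1/2$ produce the factor $\e$ via $(1+c/D)^D\to\e^c$, and show that the normalized quantity tends to $\e^{-2}$; (iii) prove via the implicit function theorem that this normalized quantity is monotone in $D$ on $[1,\infty)$, so the $D\to\infty$ value is a legitimate bound for every $D\ge1$ (the benign replacement $(D+1)^2\leadsto D^2$ caused by using $T_D\ge D/\e$ in the $i=D$ term only affects lower‑order contributions). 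The main obstacle is exactly steps~(i) and~(iii): identifying the worst weight distribution and carrying out the two‑sided, uniform‑in‑$D$ estimate, which I expect to be as technically involved as the long monotonicity arguments that dominate the proof of Lemma~\ref{thm:plb}.
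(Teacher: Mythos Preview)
Your approach is \emph{genuinely different} from the paper's, and it has a real gap.

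\textbf{What the paper does.} The paper does not touch the generating polynomials or the saddle point at all for this lemma. Instead it argues directly on the structure of the VTS recursion. It picks the deepest layer index $k$ with $\alpha_{k,D}\le\tfrac14$, and at each layer $i\in[k,K+1]$ it restricts attention to a special ``balanced'' family of vertices (Lemma~\ref{thm:struct}: vertices whose coordinates use each value in $\{0,\ldots,D\}$ equally often number $\widetilde\Omega((D+1)^n)$). It then builds, for every chosen $v^{(i+1)}$, at least $N_i=\widetilde\Omega\!\big(((D+1)!)^{(\gamma_i-\gamma_{i+1})n/(D+1)}\big)$ admissible children $v^{(i)}$, and lower bounds the recursive \textsc{Path} call by the number $S$ of vertices in the intermediate sublattice. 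The key arithmetic step is the elementary factorial bound $((D+1)!)^{1/(D+1)}\ge(D+1)/\e$, which converts all the $(D+1)!$ factors into $(D+1)/\e$. A telescoping of the exponents $\gamma_i-\gamma_{i+1}$ then collapses everything to $\widetilde\Omega\!\big(((D+1)/\e)^{2n}\big)$ for $S\cdot\prod_i N_i$, and the square root gives the claim. No induction on $D$, no analytic estimates, and nothing like the argument of Lemma~\ref{thm:plb} is needed.

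\textbf{What is missing in your proposal.} Your plan is logically coherent up to the point you label ``the analytic core'': the use of \eqref{eq:src} with $k=1$, the inductive replacement $T_i^2\ge(i+1)^2/\e^2$, and the change of variables are fine (and your induction can be made non-circular if you phrase it as an induction on~$D$ for Theorem~\ref{thm:lb} itself, invoking Lemma~\ref{thm:plb} whenever $\alpha_{1,i}>\tfrac14$). But the entire content of the lemma lies in the inequality
\[
\inf_{x>0}\ \frac{\bar S(x)}{x_1^{\alpha_{1,D}D}\cdots x_{K+1}^{D/2}}\ \ge\ \Big(\frac{D+1}{\e}\Big)^{2},
\]
uniformly in $D\ge1$, $K\ge1$, and the admissible $\alpha_{\bullet,D}$. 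You do not prove this; you only outline steps~(i)--(iii) and explicitly flag (i) and (iii) as open and ``as technically involved as'' the long calculations in Lemma~\ref{thm:plb}. That is the gap. In particular, step~(i) (identifying the worst weight vector across all $K$ and all increasing sequences $\alpha_{1,D}<\cdots<\alpha_{K,D}<\tfrac12$ with $\alpha_{1,D}\le\tfrac14$) is not obviously a finite reduction, and step~(iii) (a monotonicity-in-$D$ argument via the implicit function theorem for a \emph{multivariate} saddle system, not the univariate one of Lemma~\ref{thm:plb}) is qualitatively harder than what you compare it to. So as written, this is a program, not a proof.

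\textbf{Comparison.} If your analytic inequality could be established, the argument would fit the saddle-point framework of Section~\ref{sec:query} and would yield the same constant $(D+1)/\e$. But the paper's construction gets there with nothing more than multinomial counting and $n!\ge(n/\e)^n$, sidestepping all of the analysis you anticipate. The lesson is that for this lemma, producing explicit witness vertices is far cheaper than estimating the infimum.
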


\begin{proof}
We will prove the lower bound in the following way.
Suppose that $\alpha_{k,D} \leq \frac{1}{4}$ for some $k < K+1$ and $\alpha_{i,D} > \frac{1}{4}$ for all $k < i \leq K+1$ (recall that by our convention $\alpha_{K+1,D}=\frac{1}{2}$).
Then we will prove that the complexity of running the VTS over the layers $\mathcal L_{K+1}$, $\mathcal L_{K}$, $\ldots$, $\mathcal L_{k}$ successively and then running the \textsc{Path} recursively between layers $\mathcal L_k$ and $\mathcal L_{k+1}$ requires $\widetilde\Omega\left(\left(\frac{D+1}{\e}\right)^n\right)$ queries altogether.

More specifically, examine the process of the quantum search over the mentioned layers.
During step \ref{itm:sc3}, the VTS examines vertices $v^{(K+1)} \in \mathcal L_{K+1}$.
At the layer $\mathcal L_i$, the VTS in the \textsc{LayerPath} procedure examines vertices $v^{(i)} \in \mathcal L_i$ such that $v^{(i)} < v^{(i+1)}$.

Now, for each $i \in [k,K+1]$ we will examine only specific types of vertices $v^{(i)} \in \mathcal L_i$.
We will have the property that for each $j \in [k,K]$ and for each examined $v^{(j+1)}$, we examine at least $N_j$ vertices $v^{(j)}$.
We also define $N_{K+1}$ simply as the number of examined vertices $v^{(K+1)}$ of specific type.
Then the successive VTS calls of \textsc{LayerPath} will examine $N_k \cdots N_{K+1}$ sequences of vertices $(v^{(k)},\ldots,v^{(K+1)})$.
If the query complexity of $\textsc{Path}(v^{(k)},v^{(k+1)})$ is $T$, then the total query complexity is at least
$$\widetilde\Omega(T) \cdot \prod_{i=k}^{K+1} \widetilde\Omega(\sqrt{N_i}) =\widetilde\Omega(T\sqrt{N_k \cdots N_{K+1}}).$$
Note that here we require that both $T$ and $N_i$ are exponential in $n$ because of $\widetilde\Omega$ notation, which will be apparent in the proof later.
Also note that the product of at most $K+2$ expressions $\widetilde\Omega$ above is still $\widetilde\Omega$, as $K$ is fixed.

Moreover, we can lower bound $T$ by examining the number of vertices in the middle layer of the sublattice bounded by the vertices $v^{(k)}$ and $v^{(k+1)}$.
Let the number of vertices in this sublattice be $S$; then the middle layer of this sublattice has size at least $\frac{S}{Dn} = \widetilde\Omega(S)$, as the middle layer has the largest size (and, as we will see, $S$ is also exponential in $n$).
Since in the call of $\textsc{Path}(v^{(k)},v^{(k+1)})$ the first VTS examines all vertices in the middle layer, we have $T = \widetilde\Omega(\sqrt S)$.

Therefore, the task now is reduced to showing that we can find such types of vertices $v^{(i)}$ so that
$$S \cdot N_k \cdots N_{K+1} = \widetilde\Omega\left(\left(\frac{D+1}{\e}\right)^{2n}\right).$$

First, we prove the following lemma.

\begin{lemma} \label{thm:struct}
Examine the vertices $v \in \{0,1,\ldots,D\}^n$ such that for each $d \in \{0,1,\ldots,D\}$ we have $|\{i \mid v_i = d\}| = \frac{n}{D+1}$.
Then $|v| = \frac{nD}{2}$ and the number of such vertices is $\widetilde\Omega((D+1)^n)$.
\end{lemma}

\begin{proof}
First we can see that
$$|v| = \sum_{d=0}^D d\cdot \frac{n}{D+1} = \frac{n}{D+1} \cdot \frac{D(D+1)}{2} = \frac{nD}{2}.$$
The number of such vertices on the other hand is given by the multinomial coefficient
\begin{equation} \label{eq:mnm}
    \binom{n}{\frac{n}{D+1},\ldots,\frac{n}{D+1}} = \frac{n!}{\left(\frac{n}{D+1}!\right)^{D+1}}.
\end{equation}
Using standard bounds for the factorial, $\sqrt{2 \pi} n^{n+\frac{1}{2}} \e^{-n} \leq n! \leq \e n^{n+\frac{1}{2}} \e^{-n}$, we get that (\ref{eq:mnm}) is at least
\begin{align*}
    \frac{\sqrt{2 \pi} n^{n+\frac{1}{2}} \e^{-n}}{\left( \e \left( \frac{n}{D+1}\right)^{\frac{n}{D+1}+\frac{1}{2}} \e^{-\frac{n}{D+1}}\right)^{D+1} }
    &=  \frac{\sqrt{2 \pi n}}{\e^{D+1}} \frac{n^n}{\left( \frac{n}{D+1}\right)^{n+\frac{D+1}{2}}} \\
    &=  \frac{\sqrt{2 \pi n}}{\e^{D+1}} \left(\frac{D+1}{n}\right)^{\frac{D+1}{2}} (D+1)^n \\
    &= \widetilde\Omega((D+1)^n). \qedhere
\end{align*}
\end{proof}

Now we give the description of the vertices $v^{(i)}$.
\begin{itemize}
\item For $\mathcal L_{K+1}$, we take the vertices $v^{(K+1)}$ as is described by Lemma \ref{thm:struct}.
The number of such vertices is $$N_{K+1} = \widetilde\Omega((D+1)^n).$$
For a fixed such vertex $v^{(K+1)}$, define $I_d = \{i \mid v^{(K+1)}_i = d\}$.
\item Now let's define vertices $v^{(i)} \in \mathcal L_i$ for $k < i < K+1$.
Let
$$\gamma_i \coloneqq 2-4\alpha_{i,D}.$$
It is helpful to think of $\gamma_i$ as the coefficient telling the distance of $\alpha_{i,D}$ from $\frac{1}{2}$: if $\alpha_{i,D} = \frac{1}{2}$, then $\gamma_i = 0$, and if $\alpha_{i,D}=\frac{1}{4}$, then $\gamma_i = 1$.
We examine vertices $v^{(i)}$ with the requirement that for each $d \in [0,D]$, we can partition $I_d = A_d \cup B_d$ so that
\begin{enumerate}[(a)]
    \item $\forall j \in A_d : v^{(i)}_j = d$ and $|A_d| = (1-\gamma_i)\cdot \frac{n}{D+1}$;
    \item $\sum_{j \in B_d} v^{(i)}_j = \frac{1}{2} \cdot d\cdot \gamma_i \cdot \frac{n}{D+1}$ and $|B_d| = \gamma_i \cdot \frac{n}{D+1}$.
\end{enumerate}
Note that the above conditions also hold if $i=K+1$ (we can assume that $\gamma_{K+1} = 0$).
First we can make sure that the weight of such vertex is equal to $\alpha_{i,D} \cdot Dn$:
\begin{align*}
   |v^{(i)}| &=
   \sum_{d=0}^D \left( d\cdot (1-\gamma_i)\cdot \frac{n}{D+1} + \frac{1}{2} \cdot d\cdot \gamma_i \cdot \frac{n}{D+1}\right) \\
   &= \left(1-\frac{\gamma_i}{2}\right) \cdot \frac{n}{D+1} \cdot \sum_{d=0}^D d \\
   &= \left(1-\frac{\gamma_i}{2}\right) \cdot \frac{n}{D+1} \cdot \frac{D(D+1)}{2} \\
   &= \left(\frac{1}{2}-\frac{\gamma_i}{4}\right) \cdot Dn = \alpha_{i,D} \cdot Dn.
\end{align*}

Now take any such vertex $v^{(i+1)}$ for $i \in [k+1,K]$, with its sets $A_d$, $B_d$.
We can construct the vertices $v^{(i)} < v^{(i+1)}$ that satisfy the same conditions as follows.
For each $d \in [0,D]$:
\begin{enumerate}
    \item Take some fixed $A_d' \subset A_d$ such that $|A_d'| = (1-\gamma_{i})\cdot \frac{n}{D+1}$.
    This is possible, since $\gamma_{i} > \gamma_{i+1}$ (as $\alpha_{i} < \alpha_{i+1}$).
    For all $j \in A_d'$, let $v^{(i)} = d$.
    \item For all $j \in B_d$, let $v^{(i)}_j = v^{(i+1)}_j$.
    \item Let the set of other coordinates be $C \coloneqq A_d \setminus A_d'$.
    For $j \in C$, pick $v^{(i)}_j$ such that
    $$\sum_{j \in C} v^{(i)}_j = \frac{1}{2}\cdot d \cdot |C|.$$
\end{enumerate}
We can see that such $v^{(i)}$ satisfies the proposed conditions with $A_d'$ and $B_d' = C \cup B_d$.
By Lemma \ref{thm:struct}, the number of choices for the values of $v^{(i)}_j$ in the positions $C$ is $$\widetilde\Omega\left((d+1)^{|C|}\right) = \widetilde\Omega\left((d+1)^{(\gamma_i-\gamma_{i+1})\cdot \frac{n}{D+1}}\right).$$
Combining together these numbers for all possible $d$, we get that
$$N_i = \prod_{d=0}^D \widetilde\Omega\left((d+1)^{(\gamma_i-\gamma_{i+1})\cdot \frac{n}{D+1}}\right).$$
Since $D$ is fixed, we take the $\widetilde\Omega$ outside the product,
$$N_i = \widetilde\Omega\left(\left(\prod_{d=0}^D(d+1)\right)^{(\gamma_i-\gamma_{i+1})\cdot \frac{n}{D+1}}\right) = \widetilde\Omega\left(\left((D+1)!\right)^{(\gamma_i-\gamma_{i+1})\cdot \frac{n}{D+1}}\right).$$
Now we can use the well-known lower bound for the factorial, $n! \geq \left(\frac{n}{\text{e}}\right)^n$.
Then we get
$$N_i = \widetilde\Omega\left(\left(\left(\frac{D+1}{\text{e}}\right)^{D+1}\right)^{(\gamma_i-\gamma_{i+1})\cdot \frac{n}{D+1}}\right) = \widetilde\Omega\left(\left(\frac{D+1}{\text{e}}\right)^{(\gamma_i-\gamma_{i+1})\cdot n}\right).$$
\item Finally, we define vertices $v^{(k)} \in \mathcal L_k$.
Now let
$$\gamma_k \coloneqq 4\alpha_{k,D}.$$
In this case, $\alpha_{k,D} \leq \frac{1}{4}$.
Here $\gamma_k$ describes the distance from $\alpha_{k,D}$ to $0$: if $\alpha_{k,D} = 0$, we have $\gamma_k = 0$ and if $\alpha_{k,D} = \frac{1}{4}$, we have $\gamma_k = 1$.
Again, take a vertex $v^{(k+1)}$ satisfying the earlier conditions, with its sets $A_d$, $B_d$.
Now we distinguish two cases:
\begin{enumerate}[(a)]
    \item $\gamma_k \leq \gamma_{k+1}$.
    
    Construct $v^{(k)}$ as follows:
    \begin{enumerate}[1.]
        \item For all $j \in A_d$, let $v^{(k)}_j = 0$.
        \item For $j \in B_d$, pick any values for $v^{(k)}_j$ such that $$\sum_{j \in B_d} v^{(k)}_j = \frac{1}{2}\cdot d \cdot \gamma_k \cdot \frac{n}{D+1}.$$
        We can check that it is possible to assign such values to have $v^{(k)} < v^{(k+1)}$, since $$\sum_{j \in B_d} v^{(k+1)}_j = \frac{1}{2}\cdot d \cdot \gamma_{k+1} \cdot \frac{n}{D+1} \geq \frac{1}{2}\cdot d \cdot \gamma_{k} \cdot \frac{n}{D+1}.$$
    \end{enumerate}
    The weight of $v^{(k)}$ is equal to
    \begin{align*}
        |v^{(k)}| 
        = \sum_{d=0}^D \frac{1}{2}\cdot d \cdot \gamma_{k} \cdot \frac{n}{D+1}
        = \frac{1}{2} \cdot \gamma_k \cdot \frac{n}{D+1} \cdot \frac{D(D+1)}{2} = \alpha_{k,D} \cdot Dn.
    \end{align*}
    
    Now we can calculate the values of $N_k$ and $S$.
    We examine only a single choice of the values $v_j^{(k)}$ for $j \in B_d$, hence $$N_k = 1.$$
    On the other hand, for each $d \in [0,D]$, there are $|A_d| = (1-\gamma_{k+1})\cdot \frac{n}{D+1}$ positions such that $v^{(k)}_j = 0$ and $v^{(k+1)}_j = d$.
    Therefore, $$S = \prod_{d=0}^D (d+1)^{(1-\gamma_{k+1}) \cdot \frac{n}{D+1}} = \left(((D+1)!)^{\frac{1}{D+1}}\right)^{(1-\gamma_{k+1})\cdot n} \geq \left(\frac{D+1}{\text{e}}\right)^{(1-\gamma_{k+1})\cdot n}.$$
    
    \item $\gamma_k > \gamma_{k+1}$.
    
    Construct $v^{(k)}$ as follows:
    \begin{enumerate}[1.]
        \item Pick any $A_d' \subset A_d$ such that $|A_d'| = (1-\gamma_k) \cdot \frac{n}{D+1}$.
        For all $j \in A_d'$, let $v^{(k)}_j = 0$.
        \item For all $j\in B_d$, let $v^{(k)}_j = v^{(k+1)}_j$.
        \item Let $C \coloneqq A_d \setminus A_d'$.
        For $j \in C$, pick any values for $v^{(k)}_j$ such that $$\sum_{j \in C} v^{(k)}_j = \frac{1}{2}\cdot d \cdot |C|.$$
    \end{enumerate}
    The weight of $v^{(k)}$ again is equal to
    \begin{align*}
        |v^{(k)}| &= \sum_{d=0}^D \frac{1}{2} \cdot d \cdot (|C|+|B_d|) = \sum_{d=0}^D \frac{1}{2} \cdot d \cdot \left(\frac{n}{D+1}-|A_d'|\right) \\ &= \sum_{d=0}^D \frac{1}{2} \cdot d \cdot \gamma_k \cdot \frac{n}{D+1} = \alpha_{k,D} \cdot Dn.
    \end{align*}
\end{enumerate}
Now we calculate the values of $N_k$ and $S$.
Since $|C| = (\gamma_k - \gamma_{k+1}) \cdot \frac{n}{D+1}$, the number of choices for the values of $v^{(k)}_j$ for $j \in C$ is $\widetilde\Omega\left((d+1)^{(\gamma_k - \gamma_{k+1}) \cdot \frac{n}{D+1}}\right)$ by Lemma \ref{thm:struct}.
Taking into account all $d \in [0,D]$, we get that
\begin{align*}
    N_k 
    &= \prod_{d=0}^D \widetilde\Omega\left((d+1)^{(\gamma_k - \gamma_{k+1}) \cdot \frac{n}{D+1}}\right) \\
    &= \widetilde\Omega\left( ((D+1)!)^{(\gamma_k - \gamma_{k+1}) \cdot \frac{n}{D+1}}\right) \\
    &= \widetilde\Omega\left( \left(\frac{D+1}{\text{e}}\right)^{(\gamma_k - \gamma_{k+1}) \cdot n}\right).
\end{align*}
On the other hand, for each $d \in [0,D]$, there are $|A_d'| = (1-\gamma_k)\cdot \frac{n}{D+1}$ positions such that $v^{(k)}_j = 0$ and $v^{(k+1)}_j = d$.
    Therefore, $$S = \prod_{d=0}^D (d+1)^{(1-\gamma_k) \cdot \frac{n}{D+1}} = \left(((D+1)!)^{\frac{1}{D+1}}\right)^{(1-\gamma_k)\cdot n} \geq \left(\frac{D+1}{\text{e}}\right)^{(1-\gamma_k)\cdot n}.$$
\end{itemize}

In both cases, we can see that
$$N_k \cdot S = \widetilde\Omega\left(\left(\frac{D+1}{\text{e}}\right)^{(1-\gamma_{k+1})\cdot n}\right).$$
Finally, taking into account all other values of $N_i$, we obtain
\begin{align*}S\cdot \prod_{i=k}^{K+1} N_i
&= (S \cdot N_k) \cdot \left(\prod_{i=k+1}^K N_i\right) \cdot N_{K+1} \\ &= \widetilde\Omega\left(\left(\frac{D+1}{\text{e}}\right)^{(1-\gamma_{k+1})\cdot n}\right) \cdot \left(\prod_{i=k+1}^{K+1} \widetilde\Omega\left(\left(\frac{D+1}{\text{e}}\right)^{(\gamma_i-\gamma_{i+1})\cdot n}\right)\right) \cdot \widetilde\Omega\left(\left(\frac{D+1}{\text{e}}\right)^n\right) \\
&= \widetilde\Omega\left(\left(\frac{D+1}{\text{e}}\right)^{2n}\right)
\end{align*}
Therefore, the total complexity of the quantum search is lower bounded by
\begin{equation*}
    \widetilde\Omega\left( \left(\frac{D+1}{\text{e}}\right)^n \right). \qedhere
\end{equation*}
\end{proof}

\end{document}